\let\theCIC\CIC%
\newcommand{\scriptto}{\to}
\newcommand{\st}{such that\xspace}
\crefname{lemmaAux}{Lemma}{Lemmas}
\crefname{theoremAux}{Theorem}{Theorems}
\crefname{definitionAux}{Definition}{Definitions}
\crefname{factAux}{Fact}{Facts}
\crefname{corollaryAux}{Corollary}{Corollarys}
\let\psi\phi
\let\introterm\emph
\newcommand{\definedas}{:=}
\newcommand\blfootnote[1]{%
  \begingroup%
  \renewcommand\thefootnote{}\footnote{#1}%
  \addtocounter{footnote}{-1}%
  \endgroup%
}
\title{Parametric Church's Thesis:\\Synthetic Computability without Choice}
\titlerunning{Parametric Church's Thesis: Synthetic Computability without Choice}
\author{Yannick Forster \href{https://orcid.org/0000-0002-8676-9819}{\protect\includegraphics[scale=.25]{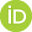}}}
\authorrunning{Yannick Forster}
\institute{Saarland University, Saarland Informatics Campus, Saarbrücken, Germany\\
\path|forster@cs.uni-saarland.de|}
\newenvironment{proofqed}{\begin{proof}}{\qed\end{proof}}
\renewcommand\L{\mathsf{L}}
\newcommand{\citetwo}[2]{\cite{#1,#2}}
\newcommand{\EAS}{\EA}
\newcommand{\SCTS}{\SCT}
\begin{document}
\maketitle
\blfootnote{\scriptsize\textcopyright~Springer Nature Switzerland AG 2022 \\
  S. Artemov and A. Nerode (Eds.): LFCS 2022, LNCS 13137, pp. 70–89, 2022. \\
  Final authenticated version: \url{https://doi.org/10.1007/978-3-030-93100-1_6}
}\vspace{-5mm}\enlargethispage{2\baselineskip}

\begin{abstract}
  In synthetic computability, pioneered by Richman, Bridges, and Bauer,
  one develops computability theory without an explicit model of computation.
  This is enabled by 
  assuming
  an axiom equivalent to postulating a function $\phi$ to be
  universal for the space $\ofbox{\nat\to\nat}$ ($\CT_\phi$, a consequence of the constructivist axiom $\CT$),
  Markov's principle,
  and at least the axiom of countable choice.
  Assuming $\CT$ and countable choice invalidates the law of excluded middle, thereby also invalidating classical intuitions prevalent in textbooks on computability.
  On the other hand, results like Rice's theorem are not provable without a form of choice.

  In contrast to existing work,
  we base our investigations in constructive type theory with a separate, impredicative universe of propositions
  where countable choice does not hold and thus a priori $\CT_{\phi}$ and the law of excluded middle seem to be consistent.
  We introduce various parametric strengthenings of $\CT_{\phi}$,
  which are equivalent to assuming $\CT_\phi$ and an $S^m_n$ operator for $\phi$ like in the $S^m_n$ theorem.
  The strengthened axioms allow developing synthetic computability theory without choice,
  as demonstrated by elegant synthetic proofs of Rice's theorem.
  Moreover, they seem to be not in conflict with classical intuitions since they are consequences of the traditional analytic form of $\CT$. %
  
  Besides explaining the novel axioms and proofs of Rice's theorem we contribute machine-checked proofs of all results in the Coq proof assistant.
\end{abstract}

\setCoqFilename{Axioms.bestaxioms}

The constructivist axiom $\CT$ (\enquote{Church's thesis})~\cite{kreisel1965mathematical,troelstra1988constructivism} states that every function $\ofbox{\nat\to\nat}$ is computable in a fixed model of computation.
In his 1992 book Odifreddi states that
the consistency of $\CT$ has been established \enquote{for all current intuitionistic systems (not involving the concept of choice sequence)}~\cite[\S 1.8 pg. 122]{odifreddi1992classical}.
For constructive type theory the consistency question is not solved entirely, but recent breakthroughs include a consistency proof for univalent type theory~\cite{swan2019church} and Martin-Löf type theory~\cite{yamada2020game}.

Assuming
$\CT$ enables developing computability theory in constructive logic in full formality without explicit encodings of programs in models of computation.
However, philosophically $\CT$ is unpleasing since its definition still requires the definition of a model.
In his seminal paper \enquote{Church's thesis without tears}~\cite{richman1983church}, Richman introduces a purely synthetic form of $\CT$ not mentioning any model of computation, which is powerful enough to develop computability theory synthetically.
The axiom is equivalent to assuming a function $\psi$ and an axiom $\CT_{\psi}$ which postulates that $\psi$ is a step-indexed interpreter universal for the function space $\ofbox{\nat\to\nat}$,
i.e.\ that every $f\of{\nat\to\nat}$ has a code $c \of \nat$ \st\ $\psi_c$ agrees with~$f$~\cite{forster2020churchs}.
Richman routinely assumes 
both Markov's principle $\MP$ and the axiom of countable choice $\AC_{\nat}$ (or even stronger forms like dependent choice),
as is the case in later work by Richman and Bridges~\cite{bridges1987varieties} and Bauer~\cite{BauerSyntCT}.
As a consequence, the law of excluded middle \LEM becomes disprovable, since $\LEM$ and $\AC_\nat$ together entail that every predicate is decidable, which is clearly in contradiction to results of synthetic computability deducible from $\CT_{\phi}$.
Text book presentations of computability however make crucial use of classical logic, rendering synthetic computability a constructivist niche.

However, Richman and Bridges state that $\AC_\nat$ can \enquote{usually be avoided}~\cite[pg.~54]{bridges1987varieties} by postulating a composition function w.r.t.\ $\phi$, which as consequence allows proving an $S^m_n$ principle w.r.t.\ $\phi$ which we abbreviate as $\SMN_\phi$.
In this paper, we show that their observation indeed holds true.

We work in the calculus of inductive constructions ($\CIC$), the type theory underlying the Coq proof assistant\rlap,\footnote{The results in the pdf version of this paper are hyperlinked with the html version of the Coq source code, which can be found at \url{https://github.com/yforster/coq-synthetic-computability/}.}
which is a foundation for constructive mathematics where the axiom of countable choice is independent, i.e.\ can be consistently assumed but is \textit{not} provable.

\paragraph{Contribution}
We 
introduce several axioms equivalent to assuming $\psi$ \st $\CT_{\psi} \land \SMN_{\psi}$.
Since working with $S^m_n$ operators in applications explicitly is tedious,
we define all axioms via
a respective notion of parametric universality. %
As a consequence, the statements of the axioms become more uniform and compact.
At the same time, the axioms become easier to use in applications.
The resulting theory enables carrying out synthetic computability theory without any form of choice axiom,
rendering the theory agnostic towards axioms like the law of excluded middle
and thereby compatible with classical intuitions.

All axioms have in common that they allow defining a enumerable but undecidable predicate $\K$, where both enumerability and decidability are defined purely in terms of functions rather than computable functions.
Since all axioms are a consequence of the constructivist axiom $\CT$,
they are consistent in \theCIC but not in contradiction to the law of excluded middle.
Thus, our axioms allow developing synthetic computability, agnostic towards classical logic.
As case studies
we give two synthetic proofs of Rice's theorem~\cite{Rice1953}:
One based on the axiom $\EPF$, following the proof approach by Scott~\cite{scott1968} relying on Rogers' fixed-point theorem~\cite{rogers1987theory},
and one based on the axiom $\EA$, establishing a many-one reduction from an undecidable problem.

\paragraph{Outline}
We motivate and introduce $\CT_{\psi}$ and $\SMN_{\psi}$ in \Cref{sec:CTdef}.
$\SCT$ is introduced in \Cref{sec:SCT}, its variants $\EPF$, $\SCT_{\bool}$, and $\EPF_{\bool}$ in \Cref{sec:EPF}.
We introduce $\EA$ in \Cref{sec:EA}
prove two synthetic versions of Rice's theorems in \Cref{sec:rice1}.

\section{Preliminaries}

We work in the Calculus of Inductive Constructions ($\CIC$), the type theory underlying the Coq proof assistant.

\subsection{Common definitions in \CIC}

We rely on the inductive types of
natural numbers $n \of \nat ::= 0 \mid \succN n$,
booleans $\bool ::= \btrue \mid \bfalse$,
lists $l \of \List X ::= [~] \mid x :: l$ where $x \of X$,
options $\option X ::= \None \mid \Some x$ where $x \of X$,
pairs $X \times Y ::= (x,y)$ where $x \of X$ and $y \of Y$,
sums $X + Y ::= \inl x \mid \inr y$ where $x \of X$ and $y \of Y$,
and,
for $p \of{ X \to \Type}$ or $p \of{ X \to \Prop}$,
$\Sigma x. p x ::= \sigpair a b $ where $a : X$ and $b : p x$.
$\pi_1$ and $\pi_2$ denote the projections $\pi_1 (a,b) := a$ and $\pi_2 (a,b) := b$.

One can easily construct a pairing function $\langle \_ \,,\, \_ \rangle \of{\nat \to \nat \to \nat}$ and for all $f \of{ \nat \to \nat \to X}$ an inverse construction $\lambda \langle  n, m \rangle .\; f n m$ of type $\nat \to X$ \st\ $(\lambda \langle n, m \rangle.\; f n m) \langle n, m \rangle = f n m$.\label{def:pairing}

For discrete $X$ (e.g.\ $\nat$, $\option \nat$, $\List \bool$, \dots), $\equivwrt{X}$ denotes equality, $\equivwrt{\Prop}$ denotes logical equivalence, $\equivwrt{A \scriptto B}$ denotes an extensional lift of $\equivwrt{B}$, $\equivwrt{A \scriptto \Prop}$ denotes extensional equivalence, and $\equivwrt{\textsf{ran}}$ denotes range equivalence.\label{introterm:equivwrt}
More formally, two functions $f, g \of X \to \option Y$ are range equivalent if $f \equivwrt{\textsf{ran}} g := \forall y.\;(\exists x.\;f x = \Some y) \leftrightarrow (\exists x.\;g x = \Some y)$.

\subsection{Partial functions}

\setCoqFilename{Shared.partial}

We work with a type $\partial A$ where $A : \Type$
for partial values
and a definedness relation $\hasvalue : \partial A \to A \to \Prop$ and write $A \pfun B$ for $A \to \partial B$.
We assume monadic structure for $\partial$ ($\retintro$ and $\bind$),\label{def:ret}
an undefined value ($\textsf{undef}$),
a minimisation operation ($\mu$), and a step-indexed evaluator ($\textsf{seval}$), see~\cite[fig.\ 2]{forster2020churchs}.

An equivalence relation on partial values can be defined as $x \equivwrt{\partial A} y := \forall a.\;x \hasvalue a \leftrightarrow y \hasvalue a$. Lifted to partial functions, we have $f \equivwrt{A \pfun B} g := \forall a b.\;f a \hasvalue b \leftrightarrow g a \hasvalue b$.

One possible definition of $\partial A$ is via stationary sequences.
We call $f \of{\nat \to \option A}$ a \emph{stationary sequence} (or just \emph{stationary}) if $\forall n a.\;f n = \Some a \to \forall m \geq n.\;f m = \Some a$.
For instance, the always undefined function $\lambda n.\;\None$ is stationary.

One can then define $\partial A := \Sigma f \of{\nat \to \option A}.\;f \textit{ is stationary}$
with 
\[ f \hasvalue a \definedas \exists n.\;\pi_1 f n = \Some a.\]

\sfcommand{mkstat}
Note that one can turn any function $f \of{\nat\to\option A}$ into a stationary sequence via a transformer $\mkstat$ with the following property.
\begin{fact}[][mkstat_spec]
  $\mkstat f \hasvalue a \leftrightarrow \exists n.\; f n = \Some a \land \forall m < n.\;f n = \None$
\end{fact}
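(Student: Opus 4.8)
The plan is to define $\mkstat$ by a bounded linear search that latches onto the first defined value of $f$, and then to verify stationarity and the stated characterisation by induction on the search bound.

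Concretely, define $g \of{\nat \to \option A}$ by recursion on its argument, $g\,0 \definedas f\,0$ and $g\,(\succN j) \definedas \iteis{g\,j}{\Some a}{\Some a}{f\,(\succN j)}$, and put $\mkstat f \definedas (g, \_)$, bundling $g$ with the stationarity proof below. First I would check that $g$ is stationary: a case analysis on $g\,j$ gives $g\,j = \Some a \to g\,(\succN j) = \Some a$, whence $\forall j' \geq j.\; g\,j' = \Some a$ by induction on $j'$; this discharges the side condition needed to form $\mkstat f \of{\partial A}$.

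The core of the argument is the auxiliary equivalence
\[ g\,j = \Some a \;\leftrightarrow\; \exists n \leq j.\; f\,n = \Some a \land \forall m < n.\; f\,m = \None, \]
proved by induction on $j$ together with the companion $g\,j = \None \leftrightarrow \forall n \leq j.\; f\,n = \None$. In the base case both sides are facts about $f\,0$. In the step case one case-splits on $g\,j$: if $g\,j = \Some b$, the induction hypothesis for $b$ yields a witness $n \leq j$, and unfolding $g\,(\succN j) = \Some b$ forces $b = a$, so the same $n$ still works; if $g\,j = \None$, the companion hypothesis gives $\forall n \leq j.\; f\,n = \None$ and $g\,(\succN j) = f\,(\succN j)$, so both sides reduce to the value of $f\,(\succN j)$, with witness $n \definedas \succN j$.

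Finally, $\mkstat f \hasvalue a$ unfolds to $\exists j.\; g\,j = \Some a$, which by the auxiliary equivalence amounts to $\exists j.\,\exists n \leq j.\; f\,n = \Some a \land \forall m < n.\; f\,m = \None$; dropping the now-redundant $j$ (and, conversely, instantiating $j \definedas n$) yields exactly the stated $\exists n.\; f\,n = \Some a \land \forall m < n.\; f\,m = \None$. I do not anticipate a genuine obstacle: the one thing to get right is the strict bound $\forall m < n$, which is precisely what "latch onto the \emph{least} defined index" delivers, together with keeping the case split in the inductive step honest.
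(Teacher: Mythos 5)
Your proof is correct and matches the evident intent of the construction: the paper states this fact without a written proof (deferring to the accompanying Coq development), and the ``latch onto the first defined value'' recursion, verified via the two companion invariants $g\,j = \Some a \leftrightarrow \exists n \leq j.\;f\,n = \Some a \land \forall m<n.\;f\,m = \None$ and $g\,j = \None \leftrightarrow \forall n \leq j.\;f\,n = \None$, is exactly what realizes the stated specification, with stationarity falling out of the latching clause. One small remark: the statement as printed contains a typo --- the final conjunct $\forall m < n.\;f\,n = \None$ should read $\forall m < n.\;f\,m = \None$ --- and your argument correctly proves the intended reading.
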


\subsection{The universe of propositions $\Prop$, elimination, and choice principles}

$\CIC$ has a separate, impredicative universe of propositions $\Prop$ and a
hierarchy of type universes $\Type_{i}$ (where the natural number index $i$ is left out from now on).
The universe $\Prop$ is separated in the sense that the definition of functions of type $\forall x : P.\;A (x)$ for $P \of \Prop$ and $A \of P \to \Type$ by case analysis on $x$ are restricted.

In $\CIC$, both dependent pairs ($\Sigma$) and existential quantification $\exists$ can be defined using inductive types.
We verbalise $\exists x$ with \enquote{there exists $x$} and in contrast $\Sigma x$ as \enquote{one can construct $x$}.
Dependent pairs can be eliminated into arbitrary contexts, i.e.\ there is an elimination function of type
\[\forall p \of{ (\Sigma x.\; A x) \to \Type}.\; (\forall x \of X.\forall y \of {A x}. \; p \sigpair x y) \to \forall s.\; p s.\]

In contrast, existential quantification can only be eliminated for $p \of{ (\exists x.\; A x) \to \Prop}$.

This is because \theCIC forbids so-called large eliminations~\cite{paulin1993inductive} on the inductively defined $\exists$ predicate.
To avoid dealing with Coq's $\texttt{match}$-construct for eliminations in detail, we instead talk about \introterm{large elimination principles}.
A large elimination principle for $\exists$, which would have the following type, is \textit{not} definable in \theCIC:
\[\forall p \of{ (\exists x.\; A x) \to \Type}.\; (\forall x \of X.\forall y \of {A x}. \; p \sigpair x y) \to \forall s.\; p s.\]

In particular, this means that one \textit{cannot} define a function of the following type in general
\[ \forall p\of{Y \to \Prop}.\; (\exists y.\; p y) \to \Sigma y \of Y.\; p y.\]

However, such an elemination of $\exists$ into $\Sigma$ is \emph{admissible} in \theCIC.
This means that any concretely given, fully constructive proof of $\exists y.\;p y$ without assumptions can always be given as a proof of $\Sigma y.\; p y$.
Note that admissibility of a statement is strictly weaker than provability, and in general does not even entail its consistency.

Crucially, \theCIC allows defining large elimination principles for the falsity proposition $\bot$ and for equality.
Additionally,
for some restricted types $Y$ and restricted predicates $p$, one \textit{can} define a large elimination principle for existential quantification.
In particular, this holds for $Y = \nat$ and $p (n \of \nat) := f n = \btrue$ for a function $f\of{\nat\to\bool}$.

\begin{fact}[][computational_explosion]
  One can define functions of type
  \begin{enumerate}
  \item $\forall A \of \Type.\; \bot \to A$
  \item $\forall X \of \Type. \forall A \of X \to \Type.\forall x_1 x_2 \of X.\;x_1 = x_2 \to A x_1 \to A x_2$.
  \item $\forall f \of{ \nat \to \bool}.\;(\exists n.\; f n = \btrue) \to \Sigma n .\; f n = \btrue$
  \end{enumerate}
\end{fact}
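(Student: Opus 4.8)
I would handle the three items separately, in increasing order of effort; each is a standard property of $\CIC$'s elimination discipline.

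For (1), $\bot$ is the inductive proposition with no constructors, so a term $h \of{\bot}$ may be destructed by a match with an empty branch list, and such a vacuous match is permitted to produce a result in any universe, in particular in $A \of{\Type}$ --- this is the degenerate case of an allowed large elimination. For (2), propositional equality is the inductive family whose unique constructor $\mathsfe{refl} \of{x = x}$ has no non-parametric arguments, and such families are among those for which $\CIC$ permits large elimination; matching the hypothesis $x_1 = x_2$ forces the pattern $\mathsfe{refl}$, unifies $x_1$ with $x_2$, and reduces the goal $A\,x_1 \to A\,x_2$ to $A\,x_1 \to A\,x_1$, which the identity discharges (the usual transport principle).

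For (3), the obstacle the fact circumvents is precisely that $\exists$ admits only $\Prop$-valued elimination, whereas the target $\Sigma n.\; f n = \btrue$ lives in $\Type$. The plan is to run a bounded linear search and to borrow, for its termination certificate, a proposition whose recursor \emph{does} land in $\Type$, namely the accessibility predicate $\mathsfe{Acc}$ for the relation $R\,y\,x \definedas y = \succN x \land f x = \bfalse$ (\enquote{move one step to the right while $f$ is still $\bfalse$}). Concretely: \emph{(i)} from the hypothesis $\exists n.\; f n = \btrue$, which may legitimately be eliminated here because the target $\mathsfe{Acc}\,R\,0$ is a proposition, extract a witness $n$ with $f n = \btrue$ and prove $\forall k.\; k \le n \to \mathsfe{Acc}\,R\,k$ by downward induction on $n - k$ --- at $k = n$ any $y$ with $R\,y\,k$ would force $f n = \bfalse$, contradicting $f n = \btrue$, so $\mathsfe{Acc}$ holds vacuously, and for $k < n$ the unique candidate $y = \succN k$ satisfies $\succN k \le n$ and is supplied by the induction hypothesis; instantiating at $k = 0$ gives $\mathsfe{Acc}\,R\,0$. \emph{(ii)} By well-founded recursion on a proof of $\mathsfe{Acc}\,R\,k$ --- a recursion $\CIC$ supports even with target $\Sigma n.\; f n = \btrue$ --- define a function $\nat \to \mathsfe{Acc}\,R\,k \to \Sigma n.\; f n = \btrue$: inspect $f k$; if it is $\btrue$ return the pair of $k$ with that equation, and if it is $\bfalse$ recurse at $\succN k$, which is justified since $R\,(\succN k)\,k$ now holds. \emph{(iii)} Apply this to $0$ and the proof of $\mathsfe{Acc}\,R\,0$.

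The genuinely delicate point of the whole fact is the idea behind (3): one cannot escape $\Prop$ directly via $\exists$, but one can first prove a $\mathsfe{Acc}$-statement with a propositional target and then exploit that $\mathsfe{Acc}$'s recursor is unrestricted, so that the information content needed to pick the witness is recovered from the \emph{structure} of the accessibility derivation rather than from the existential. Everything else --- the arithmetic of the downward induction and the recursion itself --- is routine bookkeeping.
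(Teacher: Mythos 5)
Your proposal is correct, and it matches the standard argument: items (1) and (2) are exactly the permitted large eliminations of $\bot$ and of equality (singleton elimination), and item (3) is the usual constructive-epsilon/guarded-search construction, recovering the witness by recursion on a $\Prop$-valued termination certificate ($\mathsfe{Acc}$) whose recursor is allowed to target $\Type$. The paper itself gives no in-text proof of this fact and defers to the Coq mechanization, which proceeds along essentially the same lines.
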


We will not need any other large elimination principle in this paper.
A restriction of large elimination in general is necessary for consistency of Coq~\cite{coquand:inria-00075471}.
As a by-product, the computational universe $\Type$ is separated from the logical universe $\Prop$, allowing classical logic in $\Prop$ to be assumed while the computational intuitions for $\Type$ remain intact.

The intricate interplay between $\Sigma$ and $\exists$ is in direct correspondence to the status of the axiom of choice in \theCIC.
The axiom of choice was first stated for set theory by Cantor.
In the formulation by Cantor, it is equivalent to the statement that every total, binary relation contains the graph of a function, i.e.:
\[ \forall R \subseteq X \times Y.\;(\forall x.\exists y.(x,y) \in R) \to \exists f\of{X\to Y}.\forall x.\;(x,fx) \in R  \]

Here $\ofbox{X \to Y}$ is the set-theoretic function space.
As usual, such a classical principle can also be stated in type theory.
However, the concrete formalisation crucially depends on how the notion of a (set-theoretic) function is translated:
While in set theory the term \textit{function} is just short for \textit{functional relation},
in \theCIC functions and (total) functional relations are different objects, we thus discuss both possible translations of the axiom of choice here.

\newcommand\inhabited[1]{|\!| #1 |\!|}
The more common version, used e.g.\ by Bishop~\cite{bishop2012constructive}, is to use type-theoretic functions for set-theoretic functions, i.e.\ state the type-theoretic axiom of (functional) choice~as
\[ \forall R\of{X\to Y \to \Prop}.\; (\forall x.\exists y.\; R x y) \to \exists f\of{X\to Y}.\forall x.\;R x (f x) \]
Since in type theory proofs are first class object, one can equivalently state a principle postulating the inhabitedness of the following type:
\[ {\forall p\of{Y \to \Prop}.\; (\exists y.\; p y) \to \Sigma y \of Y.\; p y}\]
Note how this is exactly the non-provable correspondence of $\exists$ and $\Sigma$ discussed above.
This formulation makes clears why in Martin-Löf type theory as implementation of Bishop's constructive mathematics, where one defines $\exists := \Sigma$,
the axiom of choice is accepted since it can be proved.
In the context of Church's simple type theory, this axiom is also known as axiom of indefinite description~\cite{andrews2002introduction}.

\subsection{Notions of synthetic computability}
\enlargethispage{2\baselineskip}

We call a predicate  $p \of X \to \Prop$ \dots
\begin{itemize}
\item \textit{decidable} if $\decidable p := \exists f \of X \to \bool.\forall x \of X.\; p x \leftrightarrow f x = \btrue$.
\item \textit{enumerable} if $\enumerable p := \exists f \of \nat \to \option X.\;\forall x \of X.\;p x \leftrightarrow \exists n \of \nat.\;f n = \Some x$.
\item \textit{semi-decidable} if $\semidecidable p := \exists f \of {X \to \nat \to \bool}.\;p x \leftrightarrow \exists n.\;f x n = \btrue$.
\end{itemize}

A type $X$ is discrete if $\lambda (x_1, x_2) \of X \times X.\;x_1 = x_2$ is decidable and
enumerable if $\lambda x : X.\;\top$ is enumerable.
We repeat the following facts from~\cite{forster2020churchs}:

\setCoqFilename{Synthetic.SemiDecidabilityFacts}%
\begin{fact}[][decidable_semi_decidable]
  \label{coq:semi_decidable_enumerable}
  \label{coq:decidable_semi_decidable}
  The following hold:
  \begin{enumerate}
    \coqitem[decidable_semi_decidable] Decidable predicates are semi-decidable and co-semi-decidable.
    \setCoqFilename{Synthetic.EnumerabilityFacts}%
    \coqitem[semi_decidable_enumerable] Semi-decidable predicates on enumerable types are enumerable.
    \coqitem[enumerable_semi_decidable] Enumerable predicates on discrete types are semi-decidable.
    \setCoqFilename{Synthetic.SemiDecidabilityFacts}%
  \coqitem[sdec_co_sdec_comp] The complement of semi-decidable predicates is co-semi-decidable.
  \end{enumerate}
\end{fact}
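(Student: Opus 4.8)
The plan is to prove each item by exhibiting an explicit witnessing function and checking the characterising equivalence; none of the four uses the axioms introduced later, so each is a short self-contained construction. For~(1), given a decider $f\of{X\to\bool}$ of $p$, I would semi-decide $p$ by $\lambda x\,n.\;f\,x$, ignoring the step index, since $(\exists n.\;f\,x=\btrue)\leftrightarrow f\,x=\btrue\leftrightarrow p\,x$. For co-semi-decidability, observe that $\compl p$ is decided by $\lambda x.\;\ite{f\,x}{\bfalse}{\btrue}$, hence is itself semi-decidable by the construction just given — which is precisely what it means for $p$ to be co-semi-decidable.

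For~(2), let $f\of{X\to\nat\to\bool}$ semi-decide $p$ and let $e\of{\nat\to\option X}$ enumerate $X$, so $\forall x.\,\exists m.\;e\,m=\Some x$. Using the pairing function I would define $g\of{\nat\to\option X}$ by
\[ g\,\langle m,n\rangle \definedas \match{e\,m}{\Some x\Rightarrow(\ite{f\,x\,n}{\Some x}{\None})\mid\None\Rightarrow\None} \]
and then check that $g$ enumerates $p$: a value $y$ lies in the range of $g$ iff there are $m,n$ with $e\,m=\Some y$ and $f\,y\,n=\btrue$; the first conjunct is vacuous because $e$ enumerates all of $X$, and the second is equivalent to $p\,y$. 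The only manipulation here is passing between $\exists m.\exists n.\,(\cdots)$ and $\exists k.\,(\cdots)$ through the pairing bijection.

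For~(3), let $e\of{\nat\to\option X}$ enumerate $p$ and let $D$ decide equality on the discrete type $X$. Then $p$ is semi-decided by $\lambda x\,n.\;\match{e\,n}{\Some y\Rightarrow D\,(x,y)\mid\None\Rightarrow\bfalse}$, since $(\exists n.\;(\cdots)=\btrue)\leftrightarrow(\exists n.\;e\,n=\Some x)\leftrightarrow p\,x$. For~(4), if $f$ semi-decides $p$ then the same $f$ witnesses co-semi-decidability of $\compl p$; all that is required is the equivalence $(\forall n.\;f\,x\,n=\bfalse)\leftrightarrow\neg\exists n.\;f\,x\,n=\btrue$, chaining with $\neg p\,x$.

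I expect no genuine obstacle: the pairing bookkeeping in~(2) is the most tedious part. The one place where care is needed is~(4), where the equivalence above must be derived from discreteness of $\bool$ (for each $n$, $f\,x\,n$ is $\btrue$ or $\bfalse$) and not from any instance of $\MP$; getting this wrong would silently import an extra classical principle, which is exactly what the paper is trying to avoid.
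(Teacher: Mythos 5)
Your constructions are correct and are exactly the standard witnesses used in the Coq development this fact is imported from (the paper itself gives no proof, merely repeating the fact from \cite{forster2020churchs}). Your closing remark about item~(4) is on point: the lemma is only constructively provable because co-semi-decidability is characterised by $\forall n.\;f\,x\,n = \bfalse$ rather than by semi-decidability of the double complement, and the needed equivalence with $\neg\exists n.\;f\,x\,n=\btrue$ follows from boolean case analysis alone, with no appeal to $\MP$.
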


\setCoqFilename{Synthetic.DecidabilityFacts}
\begin{fact}[][dec_compl]
  Decidable predicates are closed under complements.
  Decidable, enumerable, and semi-decidable predicates are closed under conjunction and disjunction.
\end{fact}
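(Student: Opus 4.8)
The plan is to check the seven closure claims one at a time, in each case by exhibiting the evident combinator on the witnessing function(s) and verifying the required equivalence. All these equivalences reduce to routine case analyses on booleans together with the basic properties of the pairing function from \Cref{def:pairing}, and only the conjunction case for enumerable predicates needs an idea beyond bookkeeping.

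For decidable predicates, suppose $p$ is decided by $f \of{X \to \bool}$ and $q$ by $g \of{X \to \bool}$. Then $\compl{p}$ is decided by $\lambda x.\,\ite{f\,x}{\bfalse}{\btrue}$, the equivalence $\compl{p}\,x \leftrightarrow (\ite{f\,x}{\bfalse}{\btrue}) = \btrue$ following by case analysis on the boolean $f\,x$ (i.e.\ using decidable equality on $\bool$). The conjunction $\lambda x.\,p\,x \land q\,x$ is decided by the pointwise boolean conjunction $\lambda x.\,\ite{f\,x}{g\,x}{\bfalse}$ and the disjunction by $\lambda x.\,\ite{f\,x}{\btrue}{g\,x}$; in both cases the equivalence follows by case analysis on $f\,x$ and $g\,x$.

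For semi-decidable predicates, suppose $p$ is semi-decided by $f \of{X \to \nat \to \bool}$ and $q$ by $g$. I would semi-decide the conjunction by the function $h$ with $h\,x \definedas \lambda\langle n,m\rangle.\,\ite{f\,x\,n}{g\,x\,m}{\bfalse}$: since $\langle \_ \,,\, \_ \rangle$ is a bijection, $\exists k.\,h\,x\,k = \btrue$ unfolds to $\exists n\,m.\,f\,x\,n = \btrue \land g\,x\,m = \btrue$, which is equivalent to $(\exists n.\,f\,x\,n = \btrue) \land (\exists m.\,g\,x\,m = \btrue)$. The disjunction is semi-decided by $\lambda x\,n.\,\ite{f\,x\,n}{\btrue}{g\,x\,n}$, using the intuitionistically valid equivalence $(\exists n.\,A_n \lor B_n) \leftrightarrow (\exists n.\,A_n) \lor (\exists n.\,B_n)$.

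For enumerable predicates, suppose $p$ is enumerated by $f \of{\nat \to \option X}$ and $q$ by $g$. The disjunction I would handle by interleaving: the enumerator returning $f\,n$ on inputs of the form $\langle 0,n\rangle$ and $g\,n$ on all other inputs enumerates exactly $\lambda x.\,p\,x \lor q\,x$, with no assumption on $X$. The conjunction is the one case with genuine content, and I expect it to be the main obstacle: unlike the disjunction, it forces a comparison of two elements of $X$. The natural construction pairs the two enumerations and, on input $\langle n,m\rangle$, computes $f\,n$ and $g\,m$ and returns $\Some x$ exactly when $f\,n = \Some x = g\,m$ — so it goes through when the equality on $X$ is decidable; otherwise one would route through \Cref{coq:decidable_semi_decidable}, reducing enumerability to semi-decidability on a discrete enumerable type and then applying the semi-decidable conjunction above. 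The remaining six claims are pure calculation with $\btrue$, $\bfalse$, and the pairing bijection.
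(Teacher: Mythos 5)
Your constructions are correct and are exactly the standard ones the paper's Coq development uses for this fact (which the text states without proof): boolean negation, conjunction, and disjunction for deciders, pointwise combination with a pairing of step indices for semi-deciders, and interleaving for enumerable disjunction. You are also right that enumerable conjunction is the one case with content—it forces a comparison of two elements of $X$, so the closure holds only for discrete $X$ (as in the formalization), a hypothesis the paper's one-line statement leaves implicit; note only that your proposed ``otherwise'' route via Fact~\ref{coq:decidable_semi_decidable} does not escape this, since passing from enumerability to semi-decidability already requires discreteness of $X$ (and returning requires its enumerability).
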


One can also characterise the notions via partial functions:
\setCoqFilename{Synthetic.EnumerabilityFacts}
\begin{fact}\label{coq:semi_decidable_part_iff}
  \begin{enumerate}
    \coqitem[enumerable_part_iff]
    $\enumerable p \leftrightarrow \exists f \of{ \nat \pfun X}.\forall x.\; p x \leftrightarrow \exists n. \; f n \hasvalue x$
    \setCoqFilename{Synthetic.SemiDecidabilityFacts}
    \coqitem[semi_decidable_part_iff]
    $\semidecidable p \leftrightarrow \exists Y (f \of{ X \pfun Y}).\forall x.\;p x \leftrightarrow \exists y.\;f x \hasvalue y$
  \end{enumerate}
\end{fact}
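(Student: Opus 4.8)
The plan is to prove both equivalences by translating back and forth between step-indexed, option-valued functions and partial functions, using only the assumed interface for $\partial$: the step-indexed evaluator $\textsf{seval}$ (satisfying $v \hasvalue a \leftrightarrow \exists n.\;\textsf{seval}\,v\,n = \Some a$), the transformer $\mkstat$, the minimiser $\mu$, and the pairing function $\langle\_,\_\rangle$. All statements involved are propositions built from nested $\exists$, so in each direction one may eliminate the hypothesis' existential quantifier while proving the goal's, and the witnessing functions one builds are assembled directly from data already in scope; hence neither a choice principle nor a large elimination is needed.

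For part~(1), after unfolding $\enumerable p$ the task is to relate a total function $f \of{\nat \to \option X}$ with $\forall x.\;p\,x \leftrightarrow \exists n.\;f\,n = \Some x$ to a partial function $g \of{\nat \pfun X}$ with $\forall x.\;p\,x \leftrightarrow \exists n.\;g\,n \hasvalue x$. From left to right, set $g\,n := \mkstat(\lambda\_.\;f\,n)$; since the argument is a constant sequence, the specification of $\mkstat$ yields $g\,n \hasvalue x \leftrightarrow f\,n = \Some x$ for every $n$, hence $\exists n.\;g\,n \hasvalue x \leftrightarrow \exists n.\;f\,n = \Some x \leftrightarrow p\,x$. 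From right to left, set $f := \lambda\langle n,k\rangle.\;\textsf{seval}\,(g\,n)\,k$; then $\exists m.\;f\,m = \Some x$ unfolds, via the pairing equation, to $\exists n\,k.\;\textsf{seval}\,(g\,n)\,k = \Some x$, which by the $\textsf{seval}$ specification is equivalent to $\exists n.\;g\,n \hasvalue x$, i.e.\ to $p\,x$.

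For part~(2), unfolding $\semidecidable p$, we relate a function $f \of{X \to \nat \to \bool}$ with $\forall x.\;p\,x \leftrightarrow \exists n.\;f\,x\,n = \btrue$ to a type $Y$ together with $g \of{X \pfun Y}$ satisfying $\forall x.\;p\,x \leftrightarrow \exists y.\;g\,x \hasvalue y$. From left to right, take $Y := \nat$ and $g\,x := \mu(f\,x)$; by the specification of $\mu$ one gets $\exists y.\;g\,x \hasvalue y \leftrightarrow \exists n.\;f\,x\,n = \btrue \leftrightarrow p\,x$. From right to left, define $f\,x\,n := \iteis{\textsf{seval}\,(g\,x)\,n}{\Some y}{\btrue}{\bfalse}$; then $\exists n.\;f\,x\,n = \btrue \leftrightarrow \exists n\,y.\;\textsf{seval}\,(g\,x)\,n = \Some y \leftrightarrow \exists y.\;g\,x \hasvalue y \leftrightarrow p\,x$.

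The only slightly delicate point is the quantifier bookkeeping in the two $\Leftarrow$ directions: one uses the pairing function to fold the enumeration index and the step index into one natural number (part~1), or commutes an $\exists n$ past an $\exists y$ (part~2), in order to line the two sides up; and one must appeal to exactly the defining property of $\textsf{seval}$, namely that iterating it long enough exhibits every value the partial value holds, and produces nothing otherwise. Everything else is routine rewriting, so there is no real obstacle.
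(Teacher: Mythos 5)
Your proof is correct and is exactly the intended argument: the paper states this fact without proof (importing it from prior work), and the standard route is precisely your back-and-forth translation using $\textsf{seval}$ to serialize a partial value into a step-indexed option-valued sequence (folding the two indices with the pairing function in part~1, commuting the two existentials in part~2), and $\mu$ resp.\ $\mkstat$ (or equivalently a $\ret$/$\mathsf{undef}$ case split on $f\,n$) in the other directions. The only nit is typing: $\mu$ expects an argument of type $\nat \pfun \bool$, so you should write $\mu(\lambda n.\;\ret (f\,x\,n))$ rather than $\mu(f\,x)$, noting that the totality side condition on $\mu$ is discharged because this lifted sequence is everywhere defined.
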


Lastly, we introduce many-one reducibility.
A predicate $p \of X \to \Prop$ is many-one reducible to a predicate $q \of Y \to \Prop$ if
\[ p \redm q := \exists f \of {X \to Y}.\;\forall x.\;p x \leftrightarrow q (f x). \]

\begin{fact}\label{coq:enumerable_red}
  \begin{enumerate}
    \coqitem[decidable_red] If $p \redm q$ and $q$ is decidable, then $p$ is decidable.
  \coqitem[enumerable_red]
  Let $X$ be enumerable, $Y$ discrete, and $p \of{X \to \Prop}$, $q \of{ Y \to \Prop}$.
  If $p \redm q$ and $q$ is enumerable then $p$ is enumerable.
\end{enumerate}
\end{fact}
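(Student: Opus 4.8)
The plan is to treat the two parts separately; in each case the idea is simply to compose the reduction function with the witness provided for $q$.

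For part (1), unfold $p \redm q$ to obtain $f \of{X \to Y}$ with $\forall x.\; p x \leftrightarrow q (f x)$, and unfold $\decidable q$ to obtain $g \of{Y \to \bool}$ with $\forall y.\; q y \leftrightarrow g y = \btrue$. Then $\lambda x.\; g (f x)$ decides $p$, since $p x \leftrightarrow q (f x) \leftrightarrow g (f x) = \btrue$. This part is immediate and has no obstacle.

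For part (2), I would route through semi-decidability rather than argue about enumerators directly. Since $Y$ is discrete and $q$ is enumerable, the fact that enumerable predicates on discrete types are semi-decidable gives a semi-decider $h \of{Y \to \nat \to \bool}$ for $q$. Taking $f$ from the reduction, the function $\lambda x\, n.\; h\,(f x)\,n$ semi-decides $p$, because $p x \leftrightarrow q (f x) \leftrightarrow \exists n.\; h\,(f x)\,n = \btrue$; that is, semi-decidability is preserved under arbitrary precomposition. Now $p$ is a semi-decidable predicate on the enumerable type $X$, so the fact that semi-decidable predicates on enumerable types are enumerable yields $\enumerable p$.

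The only points needing attention — not genuine obstacles — are that the side conditions of the two invoked facts are met exactly: discreteness of $Y$ is used precisely to turn enumerability of $q$ into semi-decidability, and enumerability of $X$ is used precisely to turn semi-decidability of $p$ back into enumerability. One could instead give a direct enumerator for $p$ by combining an enumerator $e_X \of{\nat \to \option X}$ of $X$, an enumerator $e_q \of{\nat \to \option Y}$ of $q$, and the (decidable) equality on $Y$, outputting $x$ on input $\langle n, m\rangle$ whenever $e_X\, n = \Some x$, $e_q\, m = \Some y$, and $f x = y$; but the route through semi-decidability is shorter given the facts already available in the excerpt.
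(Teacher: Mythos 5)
Your proposal is correct: part (1) is the evident composition of the decider with the reduction function, and part (2) correctly routes through semi-decidability using exactly the two closure facts (enumerable on discrete $\Rightarrow$ semi-decidable, and semi-decidable on enumerable $\Rightarrow$ enumerable) with their side conditions discharged by the hypotheses on $Y$ and $X$. The paper itself gives no proof of this fact, deferring to the accompanying Coq development, and your argument is the standard one that development uses.
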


\section{Church's thesis}
\label{sec:CTdef}\label{def:CT}

Textbooks on computability start by defining a model of computation, Rogers \cite{rogers1987theory} uses $\mu$-recursive functions.
As center of the theory, Rogers defines a step-indexed interpreter $\psi$ of all $\mu$-recursive functions.
An application $\psi_c^n x$ denotes executing the $\mu$-recursive function with code $c$ on input $x$ for $n$ steps.

Once some evidence is gathered, Rogers (as well as other authors) introduce the Church-Turing thesis, stating that all intuitively calculable functions are $\mu$-recursively computable.
Using the Church-Turing thesis, $\psi$ has the following (informal) universal property:
\[ \forall f \of \nat \to \nat.\;\textit{ intuitively computable } f \to \exists c \of \nat.\forall x \of \nat.\exists n.\;\psi_c^n x = \Some (f x) \]

Note that the property is really only pseudo-formal:
The notion of intuitive calculability is not made precise, which is exactly what allows $\psi$ to stay abstract for most of the development.
Every invocation of the universality could be replaced by an individual construction of a ($\mu$-recursive) program, but relying solely on the notion of intuitive calculability allows Rogers to build a theory based on a function $\psi$ which could equivalently be implemented in any other model of computation.
Since not every function $f \of{\nat\to\nat}$ in the classical set theory Rogers works in\footnote{\enquote{We use the rules and conventions of classical
    two-valued logic (as is the common practice in other parts of mathematics), and we say
    that an object exists if its existence can be demonstrated within standard set theory.
    We include the axiom of choice as a principle of our set theory.}~\cite[pg. 10, footnote $\dagger$]{rogers1987theory}}
is intuitively computable, every invocation of the universality of $\psi$ has to be checked individually to ensure that it is indeed for an intuitively calculable function.

We however do not work in classical set theory, but in \theCIC, a constructive system.
As in all constructive systems, every definable function is %
intuitively calculable.
It is thus natural to assume that the universal function $\psi$ is universal for \textit{all} functions $f\of{\nat\to\nat}$.
For historical reasons, this axiom is called $\CT$ (\enquote{Church's thesis})~\citetwo{kreisel1965mathematical}{troelstra1988constructivism}.

We define $\CT_{\psi}$ parametric in a step-indexed interpreter $\psi\of{\nat\to\nat\to\nat\to\option \nat}$.
As before, we write an evaluation of code $c$ on input $x$ for $n$ steps as $\psi_c^n x$ instead of $\psi\, c\, x\, n$.
For step-indexed interpreters, the sequence $\lambda n.\psi_c^n x$ is always stationary:
\[ \forall c x n_1 v.\;\psi_c^{n_1} x = \Some v \to \forall n_2.\; n_2 \geq n_1 \to \psi_c^{n_2} x = \Some v \]

Now $\CT_{\psi}$ states that $\psi\of{\nat\to\nat\to\nat\to\option\nat}$ is universal for \textit{all} functions $f\of{\nat\to\nat}$:
\[ \CT_{\psi} := \forall f \of {\nat \to \nat}.\exists c\of\nat.\forall x\of\nat.\exists n\of\nat.\; \psi_c^n x = \Some (f x)  \]

One can also see $\phi$ as an enumeration of stationary functions from $\nat$ to $\nat$,
which enumerates every total function $f$.

$\CT_{\psi}$ is not provable in \theCIC, independent of the definition of $\psi$. %
However, when $\psi$ is a step-indexed interpreter for a model of computation,
$\CT_{\psi}$ is the well-known constructivist axiom $\CT$, see \cite{forster2020churchs} for a treatment of its status in \CIC.
We give an overview over arguments why $\CT$ is consistent in \CIC in \Cref{sec:CTcons}.

In contrast to textbook proofs, proofs of theorems based on $\CT_{\psi}$ do not have to be individually checked for valid applications of the Church-Turing thesis.

As stated above, $\CT_{\psi}$ applies to unary total functions, but is immediately extensible to $n$-ary functions $f \of{ \nat^n \to \nat}$ using pairing. %
Partial application for such $n$-ary functions is realised via the $S^m_n$ theorem.
We only state the case $m=n=1$, which implies the general case:
\label{def:SMN}%
\begin{align*}
  \SMN_\psi &:= \Sigma \sigma \of{\nat\to\nat\to\nat}.\forall c x y v.\; (\exists n.\;\psi_{\sigma c x}^n y = \Some v) \leftrightarrow (\exists n.\;\psi_c^n \langle x,y \rangle = \Some v) %
\end{align*}

Note that we formulate $\SMN$ with a $\Sigma$ rather than an $\exists$.
For the results we consider in this paper, the different is largely cosmetic.
The formulation with $\Sigma$ allows the construction of functions accessing $\sigma$ directly, rather than only being able to prove the existence of functions based on $\sigma$.

The key property of $\CT_\psi$ is that it allows the definition of an enumerable but undecidable problem:

\setCoqFilename{Axioms.bestaxioms}
\begin{lemma}[][CT_halting]
  Let $\phi$ be stationary. Then $\CT_\psi \to \Sigma p \of{\nat\to\Prop}.\;\semidecidable p \land \neg \semidecidable \compl p \land \neg \decidable p \land \neg \decidable \compl p$.
\end{lemma}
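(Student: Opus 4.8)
The plan is to take $p$ to be the self-halting problem that additionally tests for output $0$:
\[ p \definedas \lambda c.\;\exists n.\;\psi_c^n c = \Some 0, \]
which I will take to be (a version of) the enumerable but undecidable predicate $\K$ mentioned above. Two properties of $\psi$ suffice: it is stationary by hypothesis, and hence \emph{deterministic} — if $\psi_c^{n_1} x = \Some v_1$ and $\psi_c^{n_2} x = \Some v_2$, then applying stationarity at $\max(n_1,n_2)$ forces $v_1 = v_2$.

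That $p$ is semi-decidable is immediate, with semi-decider $\lambda c\,n.\;(\psi_c^n c = \Some 0)$, using that equality on $\option\nat$ is decidable. For $\neg\decidable p$ I diagonalise: assume $d$ is a decider for $p$ and set $t \definedas \lambda c.\;\ite{d\,c}{1}{0}$, a total function of type $\nat\to\nat$. By $\CT_\psi$ there is a code $c$ with $\forall x.\;\exists n.\;\psi_c^n x = \Some(t\,x)$; fix $n_0$ with $\psi_c^{n_0} c = \Some(t\,c)$. If $d\,c = \btrue$, then $p\,c$ holds, so $\psi_c^m c = \Some 0$ for some $m$, and determinism forces $t\,c = 0$, contradicting $t\,c = 1$. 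If $d\,c = \bfalse$, then $t\,c = 0$, so $\psi_c^{n_0} c = \Some 0$, so $p\,c$ holds, contradicting $d\,c = \bfalse$. Hence $\neg\decidable p$, and since decidable predicates are closed under complements, also $\neg\decidable\compl p$.

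The clause $\neg\semidecidable\compl p$ is, I expect, the crux. It does not follow formally from the previous parts — in $\CIC$, semi-decidability of $p$ together with semi-decidability of $\compl p$ does not entail decidability of $p$ without an instance of Markov's principle — so it needs its own argument. I would diagonalise once more, now against a hypothetical semi-decider $g$ of $\compl p$: use $\CT_\psi$ to obtain a code $c$ of a suitable total function built from $g$, pin down the value $\psi_c$ returns on input $c$ via determinism, and contradict the soundness direction of $g$ (namely $g\,c\,m = \btrue \to \forall n.\;\psi_c^n c \neq \Some 0$). The delicate point — and the step I expect to be the main obstacle — is to keep this diagonal function total while still defeating $g$, since the obvious candidate has to search $g$ unboundedly; resolving this tension is the technical heart of the lemma.
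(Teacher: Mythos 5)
There is a genuine gap, and it sits exactly where you flagged it: the clause $\neg\semidecidable\compl p$ is never proved, and with your choice of $p$ the diagonalisation you sketch does run into the obstruction you describe. A semi-decider $g\of{\nat\to\nat\to\bool}$ for $\compl p$ only gives $\compl p\,c \leftrightarrow \exists m.\;g\,c\,m = \btrue$, so to defeat it with a function coded via $\CT_\psi$ (which only codes \emph{total} functions $\nat\to\nat$) you would need to decide that existential, i.e.\ perform an unbounded search inside a total function. That tension is not a technicality to be worked out later; it is the reason the plain diagonal $p\,c := \exists n.\;\psi_c^n c = \Some 0$ is the wrong predicate here.

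The missing idea is to build the semi-decider's step parameter into the \emph{input} of the diagonal predicate. The paper takes
\[ p\,c := \exists n\,m.\;\psi_c^m \langle c,n \rangle = \Some 0 , \]
so that given a semi-decider $f$ of $\compl p$ one forms the total function $\lambda \langle x,n \rangle.\;\ite{f\,x\,n}{0}{1}$, obtains its code $c$ from $\CT_\psi$, and computes
$p\,c \leftrightarrow \exists n.\;f\,c\,n = \btrue \leftrightarrow \compl p\,c$ directly --- no unbounded search, only a single evaluation of $f$ at the pair handed in as input, plus the determinism of $\psi$ from stationarity (which you correctly identified as needed). Note also that once $\neg\semidecidable\compl p$ is in hand, both $\neg\decidable p$ and $\neg\decidable\compl p$ follow for free from Fact~\ref{coq:decidable_semi_decidable}, so your separate diagonalisation against a decider (which is correct) becomes redundant. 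As it stands, your proposal establishes only the two easy conjuncts of the four.
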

\begin{proofqed}
  One can define $p c := \exists n m.\;\psi_c^m \langle c,n \rangle = \Some 0$
  and clearly we have $\semidecidable p$.
  If $f \of {\nat\to\nat\to\bool}$ is a semi-decider for $\compl p$,
  let $c$ be its code w.r.t.\ $\psi$.
  Then $p c \leftrightarrow \neg p c$, contradiction.
  Thus $p$ is also not decidable.
\end{proofqed}

\section{Synthetic Church's Thesis}
\label{sec:SCT}

By keeping $\psi$ abstract and assuming $\CT_{\psi}$, one never has to deal with encodings in a model of computation.
However, formal proofs involving the $\SMN_{\psi}$ axiom are tedious.
We identify the axiom \introterm{synthetic Church's thesis} $\SCT$ as a more convenient variant of $\CT_{\psi} \land \SMN_\psi$,
which postulates a step-indexed interpreter $\psi$ \introterm{parametrically universal} for $\nat\to\nat$:
\label{def:SCT}
\begin{align*}
  \SCT := \Sigma \psi &\of{\nat \to \nat \to \nat \to \option \nat}. \\
                       &(\forall c\, x\, n_1\, n_2\, v.\;\psi_c^{n_1} x = \Some v \to n_2 \geq n_1 \to \psi_c^{n_2} x = \Some v) ~\land \\
                       &\forall f \of{\nat \to \nat \to \nat}.\;\exists \gamma \of \nat \to \nat.\; \forall i x.\exists n.\; \psi_{\gamma i}^{n} x = \Some (f_i x)
\end{align*}

By parametrically universal we mean that for any family of functions $f_i \of{\nat\to\nat}$ parameterised by $i \of \nat$,
we obtain a coding function $\gamma$ s.t.\ $\gamma i$ is the code of $f_i$, i.e.\ $\psi_{\gamma i}$ agrees with~$f_i$.

The consistency of $\SCT$ follows from the consistency of $\CT$ formulated for a Turing-complete model of computation.
For this purpose, we choose the weak call-by-value $\lambda$-calculus $\L$, which we discuss in detail in \Cref{sec:CTinL}.
Conversely, one can recover non-parametric universality of $\psi$ from parametric universality:

\begin{theorem}[][CT_SMN_to_SCT]
  Let $\psi_{\L}$ be a step-indexed interpreter for $\L$.
  For any $\psi$ \st\ $\lambda n.\psi_c^n x$ is stationary we have the following:
  \begin{enumerate}
  \item $\CT_{\psi_{\L}} \to \Sigma \psi.\;\CT_\psi \land \SMN_\psi$
  \coqitem[CT_SMN_to_SCT] $\CT_\psi \to \SMN_\psi \to \SCTS$
  \coqitem[SCT_to_CT] $\SCTS \to \Sigma \psi.\; \CT_\psi$
  \end{enumerate}
\end{theorem}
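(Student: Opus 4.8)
The plan is to prove the three items in order, since each feeds into the next. For item (1), given $\CT_{\psi_\L}$, I would take the step-indexed interpreter $\psi_\L$ for $\L$ itself as the witness for $\psi$. Then $\CT_{\psi_\L}$ is exactly $\CT_\psi$. For $\SMN_\psi$, I would use the $S^m_n$ operator that $\L$ provides internally: since $\L$ is a Turing-complete model with a computable $S^m_n$ operation on codes, there is a concrete function $\sigma \of \nat \to \nat \to \nat$ on $\L$-codes such that running the code $\sigma\,c\,x$ on input $y$ simulates running $c$ on $\langle x, y\rangle$. The equivalence $(\exists n.\;\psi_{\sigma c x}^n y = \Some v) \leftrightarrow (\exists n.\;\psi_c^n \langle x,y\rangle = \Some v)$ then follows from correctness of this operator; crucially, since $\sigma$ is given concretely (not merely existentially) we may package $\SMN_\psi$ with a $\Sigma$ as required, and the whole conjunction is wrapped in a $\Sigma$ because both $\psi$ and $\sigma$ are concrete terms.

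For item (2), assume $\CT_\psi$ and $\SMN_\psi$ for a $\psi$ with the stationarity property; I must produce $\SCT$. The witness $\psi$ in $\SCT$ is the very same $\psi$, and the first conjunct of $\SCT$ (monotonicity in the step index) is precisely the stationarity hypothesis. It remains to establish parametric universality: given $f \of \nat\to\nat\to\nat$, I need $\gamma \of \nat\to\nat$ with $\forall i\, x.\exists n.\;\psi_{\gamma i}^n x = \Some(f_i x)$. Here I would first apply $\CT_\psi$ to the uncurried function $\lambda \langle i,x\rangle.\;f_i x \of \nat\to\nat$, obtaining a single code $c$ with $\forall p.\exists n.\;\psi_c^n p = \Some((\lambda\langle i,x\rangle. f_i x)\, p)$, and in particular $\forall i\,x.\exists n.\;\psi_c^n \langle i,x\rangle = \Some(f_i x)$. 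Then I would set $\gamma i := \sigma\, c\, i$ using the $S^m_n$ operator $\sigma$ from $\SMN_\psi$; the defining equivalence of $\sigma$ turns $\exists n.\;\psi_c^n\langle i,x\rangle = \Some(f_i x)$ into $\exists n.\;\psi_{\sigma c i}^n x = \Some(f_i x)$, which is exactly what $\gamma$ must satisfy. Since $\sigma$ comes via a $\Sigma$, $\gamma$ is a genuine function, so this can be done inside the $\Sigma$ demanded by $\SCT$.

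For item (3), assume $\SCT$ and extract $\Sigma \psi.\;\CT_\psi$. Destructing $\SCT$ gives a concrete $\psi$ together with its stationarity property and parametric universality. I take this same $\psi$; I must show $\CT_\psi$, i.e.\ $\forall f \of \nat\to\nat.\exists c.\forall x.\exists n.\;\psi_c^n x = \Some(f x)$. Given $f$, I instantiate parametric universality with the constant family $\lambda i.\,f$ (i.e.\ $f_i := f$, ignoring $i$), obtaining $\gamma$ with $\forall i\,x.\exists n.\;\psi_{\gamma i}^n x = \Some(f x)$; then $c := \gamma\, 0$ works. Extracting the outer $\Sigma\psi$ is unproblematic because $\psi$ was already available concretely from $\SCT$.

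The main obstacle is item (1): it is the only place where something genuinely model-specific is used, namely that $\L$ (or whichever Turing-complete model underlies $\psi_\L$) actually comes equipped with a concrete, provably-correct $S^m_n$ operator on codes, so that $\SMN_\psi$ — with its $\Sigma$ — is inhabited rather than merely consistent. The remaining two items are essentially bookkeeping: currying/uncurrying via the pairing function, instantiating parametric universality at a constant or uncurried family, and composing with $\sigma$; the only subtlety is keeping track of which quantifiers are $\Sigma$ versus $\exists$ so that the $\Sigma$-wrapped conclusions are actually constructible.
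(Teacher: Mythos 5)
Your proposal is correct and follows essentially the same route as the paper: item (1) via the concrete $S^m_n$ operator constructed for $\L$, item (2) by coding the uncurried $\lambda\langle i,x\rangle.\,f_i x$ and setting $\gamma\,i := \sigma\,c\,i$, and item (3) via the constant family and $\gamma\,0$. Your additional remarks on tracking $\Sigma$ versus $\exists$ are accurate and consistent with how the paper handles these quantifiers.
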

\begin{proofqed}
  (1) follows by proving $\SMN_{\psi_{\L}}$, which we do in \Cref{sec:CTinL}, see \Cref{lem:CTL_to_SMN}.

  For (2),
  let $\psi$ and $\sigma$ be given.
  We prove that $\psi$ satisfies the condition in $\SCTS$.
  Let
 $f\of{\nat\to\nat\to\nat}$ be given.
  We obtain a code $c$ for $\lambda \langle x,y \rangle.\; f x y$.
  Now define $\gamma x \definedas \sigma c x$.

  (3) is trivial by turning the unary function $f\of{\nat\to\nat}$ into the (constant) family of functions $f'_x y := f y$.
  Now a coding function $\gamma$ for $f'$ allows to choose $\gamma 0$ as code for $f$.
\end{proofqed}

In Theorem \ref{coq:EPF_to_CT_SMN} we prove that $\SCT$ also implies $\Sigma \psi.\;\CT_{\psi} \land \SMN_{\psi}$.
Note that the consistency of $\CT_{\phi_\L}$ implies the consistency of $\Sigma \psi.\;\CT_\psi \land \SMN_\psi$ and $\SCT$.

\section{Variations of Synthetic Church's Thesis}
\label{sec:EPF}\label{def:EPF}

We have defined $\SCT$ to postulate a step-indexed interpreter $\psi \of{\nat \to (\nat \to \nat \to \option \nat)}$,
parametrically universal for $\nat\to\nat$.
In this section, we develop equivalent variations of $\SCT$.
There are three obvious points where $\SCT$ can be modified.

\begin{enumerate}
\item The return type of $\psi$ can be stationary functions of type $\nat\to(\nat\to\option\nat)$ or $\nat\to(\nat\to\option\bool)$, or partial functions of type $\nat\pfun\nat$ or $\nat\pfun\bool$, 
\item $\psi$ can be postulated to be parametrically universal for $\ofbox{\nat\to\nat}$, $\ofbox{\nat\to\bool}$, $\ofbox{\nat\pfun\nat}$, $\ofbox{\nat\pfun\bool}$, or stationary functions $\nat\to(\nat\to\option\nat)$ or $\nat\to(\nat\to\option\bool)$.
\item Coding functions $\gamma$ can be existentially quantified ($\exists$),
  computably obtained ($\Sigma$), or classically existentially quantified $\neg\neg\exists$.
\end{enumerate}

For $\SCT$, the return type of $\psi$ is stationary functions $\nat\to(\nat\to\option\nat)$, $\psi$ is parametrically universal for $\nat\to\nat$,
and $\gamma$ is existentially quantified.

For (1), it is important to see that letting $\psi$ return total functions is no option, since such an enumeration is inconsistent\rlap,%
\footnote{Note that conversely an injection of $(\nat\to\nat)\to\nat$ can likely be consistently assumed~\cite{bauer2015injection}.}
even up to extensionality:

\begin{fact}[Cantor]
  There is no $e\of{\nat\to(\nat\to A)}$ \st\ $\forall f\of{\nat\to A}.\;\exists c.e c \equiv_{\nat\to A} f$ for $A = \nat$ or $A = \bool$.
\end{fact}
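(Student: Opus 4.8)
The plan is to run the classical diagonal argument of Cantor, adapted to the constructive setting of $\CIC$. Suppose we are given $e\of{\nat\to(\nat\to A)}$ with the stated surjectivity-up-to-extensionality property. The idea is to build a function $g\of{\nat\to A}$ that differs from each $e\,c$ at input $c$, contradicting the existence of a code $c$ with $e\,c\equiv_{\nat\to A}g$. For this we only need, for each element $a\of A$, to produce a (definable) element of $A$ different from $a$: for $A=\bool$ this is the boolean negation $\lambda b.\;\iteis{b}{\btrue}{\bfalse}{\btrue}$ (i.e.\ flip the bit), and for $A=\nat$ it is the successor $\succN$. Call this function $\mathsf{flip}\of{A\to A}$; in both cases we have $\mathsf{flip}\,a\neq a$ decidably, since $A$ is discrete.

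**Then I would** set $g\definedas\lambda c.\;\mathsf{flip}(e\,c\,c)$, which is a total function $\nat\to A$. Applying the assumed surjectivity to $g$ yields some $c_0$ with $e\,c_0\equiv_{\nat\to A}g$, which unfolds to $\forall x.\;e\,c_0\,x = g\,x$. Instantiating at $x\definedas c_0$ gives $e\,c_0\,c_0 = g\,c_0 = \mathsf{flip}(e\,c_0\,c_0)$, contradicting $\mathsf{flip}\,a\neq a$ applied with $a\definedas e\,c_0\,c_0$. This closes the proof.

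**There is essentially no obstacle here** — the argument is fully constructive and needs no large elimination, since the witness $c_0$ is used only to derive $\bot$, which lives in $\Prop$. The one point worth a moment's care is that "$\equiv_{\nat\to A}$" is the extensional equality $\equivwrt{\nat\to A}$ introduced on page~\pageref{introterm:equivwrt}, so one must explicitly unfold it to the pointwise equality $\forall x.\;e\,c_0\,x=g\,x$ before instantiating at $c_0$; the diagonalisation goes through precisely because the extensional notion of agreement is still strong enough to be tested at a single point. The remaining content is the trivial observation that both $\nat$ and $\bool$ carry a fixed-point-free self-map, namely $\succN$ and boolean negation respectively.
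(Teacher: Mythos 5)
Your proof is correct and is exactly the diagonal argument the paper intends (the fact is stated without proof, attributed to Cantor): diagonalise with a fixed-point-free self-map of $A$ ($\succN$ for $\nat$, negation for $\bool$), unfold $\equivwrt{\nat\scriptto A}$ to pointwise equality, and instantiate at the witness's own index. Your remark that eliminating the $\exists$ is unproblematic because the goal is $\bot\of\Prop$ is also the right observation to make in this setting.
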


For (3), the variant with $\Sigma$ is consistent, but negates functional extensionality~\cite{yamada2020game}.
Variants with $\neg\neg\exists$ are often called \textit{Weak $\CT$}~\cite{mccarty1991incompleteness}, we refrain from discussing such variants in this paper.

In this section, we discuss how all other variations of $\SCT$ are equivalent, and single out three of them:
\begin{enumerate}
\item $\EPF$, the enumerability of partial functions axiom, postulating
  $\theta \of \nat \to (\nat\pfun\nat)$ parametrically universal for $\nat\pfun\nat$,
\item $\SCT_\bool$, postulating
  $\psi \of \nat \to (\nat\to(\nat\to\option\bool))$ universal for $\nat\to\bool$, and
\item $\EPF_{\bool}$, postulating
  $\theta \of \nat \to (\nat\pfun\bool)$ universal for $\nat\pfun\bool$.
\end{enumerate}

The \introterm{enumerability of partial functions axiom} $\EPF$ is defined as:
\[ \EPFintro := \exists \theta \of{\nat \to (\nat \pfun \nat)}.\forall f \of{\nat \to \nat \pfun \nat}. \exists \gamma \of{\nat\to\nat}.\forall i.\;\theta_{\gamma i} \equivwrt{\nat\scriptpfun\nat} f_i \]

Instead of seeing $\theta$ as enumeration, we can also see $\theta$ as surjection from $\nat$ to $\nat\pfun\nat$ up to $\equivwrt{\nat\nrightarrow\nat}$.
Proving that $\SCT \leftrightarrow \EPF$ amounts to showing that any implementation of partial functions is equivalent to the implementation based on stationary sequences, and that any stationary function can be encoded in a total function $\ofbox{\nat\to\nat}$ via pairing.

\begin{theorem}[][SCT_to_EPF]\label{coq:EPF_to_SCT}
  $\SCT \leftrightarrow \EPF$
\end{theorem}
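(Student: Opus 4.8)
The plan is to prove the two implications separately, in each case transporting a parametric universality statement from one return-type representation (stationary sequences $\nat\to\nat\to\nat\to\option\nat$) to the other (partial functions $\nat\pfun\nat$), using the step-indexed evaluator $\textsf{seval}$ and the stationarisation transformer $\mkstat$ from the preliminaries. The conceptual content is that the two implementations of partiality are interchangeable; the only mild subtlety is bookkeeping with the coding functions $\gamma$ and an application of pairing.

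For the direction $\SCT \to \EPF$, I would take the $\psi$ and the stationarity witness provided by $\SCT$ and define $\theta_c := \lambda x.\; \mkstat(\lambda n.\; \psi_c^n x)$, so $\theta \of \nat \to (\nat \pfun \nat)$. Now given a family $f \of \nat \to \nat \pfun \nat$, I want a total $g \of \nat \to \nat \to \nat$ whose graph captures $f$ in a way $\psi$ can enumerate. The natural move is to use $\textsf{seval}$: the value of $f_i x$, if defined, is found at some step, so I would encode $g\, i\, \langle x, k\rangle$ to run $\textsf{seval}$ on $f_i x$ for $k$ steps and return (a suitable encoding of) the answer, padding undefined results to a dummy value. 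Because $g$ is total, $\SCT$ yields a coding function $\gamma_0$ with $\psi_{\gamma_0 i}$ agreeing with $g_i$; then post-composing with the $S^m_n$-style partial-application/decoding to strip off the search over $k$ gives the desired $\gamma$ with $\theta_{\gamma i} \equivwrt{\nat\scriptpfun\nat} f_i$. The verification is just chasing the $\hasvalue$ relation through $\mkstat\_spec$ and the stationarity of $\lambda n.\psi_c^n x$.

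For the direction $\EPF \to \SCT$, I would take the $\theta \of \nat \to (\nat\pfun\nat)$ from $\EPF$ and recover a step-indexed interpreter by setting $\psi_c^n x := \textsf{seval}(\theta_c x)\, n$; stationarity of $\lambda n.\psi_c^n x$ is then immediate from the stationarity of $\textsf{seval}$ (this discharges the first conjunct of $\SCT$). For the universality clause, given $f \of \nat \to \nat \to \nat$ I view each $f_i$ as a total, hence a fortiori partial, function $\nat\pfun\nat$ (using $\retintro$), apply $\EPF$ to this family to get $\gamma$ with $\theta_{\gamma i} \equivwrt{\nat\scriptpfun\nat} (\lambda x.\; \ret(f_i x))$, and conclude $\psi_{\gamma i}^n x = \Some(f_i x)$ for some $n$ by unfolding $\textsf{seval}$ against $\hasvalue$.

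The main obstacle I expect is not any of the individual encodings but getting the interface between total functions and partial functions lined up cleanly in the $\SCT \to \EPF$ direction: $\SCT$ only hands us codes for \emph{total} $g \of \nat\to\nat$, so the search over the step index $k$ that recovers partiality has to be moved out of $g$ and into the decoding applied to $\gamma_0$, which is exactly where a composition/$S^m_n$-type manipulation (cf.\ \Cref{coq:EPF_to_CT_SMN} and the discussion of $\SMN_\psi$) is needed. Packaging $f_i x$'s result together with a "defined?" flag so that the padded dummy value is distinguishable is the one place to be careful; everything else is routine unfolding of $\mkstat\_spec$, $\textsf{seval}$, and the pairing function.
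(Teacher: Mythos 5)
Your $\EPF\to\SCT$ direction matches the paper's proof exactly ($\psi_c^n x := \mathsfe{seval}\,(\theta_c x)\,n$, then code $\lambda x.\ret(f_i x)$). The $\SCT\to\EPF$ direction, however, has a genuine gap, located precisely at the step you yourself flag as the main obstacle. You fix the \emph{identity} decoding $\theta_c x := \mkstat(\lambda n.\;\psi_c^n x)$, so that $\theta_c x$ converges exactly when the stationary sequence $\lambda n.\psi_c^n x$ does. But every code $\gamma_0\, i$ that $\SCT$'s universality clause can produce is a code of a \emph{total} function, so $\lambda n.\psi_{\gamma_0 i}^n y$ converges for every $y$; and no ``$S^m_n$-style'' post-composition on the code side can repair this, since an operator $\sigma$ with $(\exists n.\,\psi^n_{\sigma c x} y = \Some v)\leftrightarrow(\exists n.\,\psi^n_c \langle x,y\rangle = \Some v)$ necessarily maps codes of total functions to codes of everywhere-convergent behaviours. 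Hence under your decoding $\theta_{\gamma i}$ can never diverge where $f_i$ does, and the search over $k$ cannot be ``stripped off'' on the coding side. There is also a circularity: $\SCT$ does not come equipped with $\SMN_\psi$; in the paper $\Sigma\psi.\;\CT_\psi\land\SMN_\psi$ is obtained from $\SCT$ only by first passing through $\EPF$ (Theorem \ref{coq:EPF_to_CT_SMN}), i.e.\ through the very equivalence you are proving.

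The repair is to move the unbounded search and the defined/undefined convention into the \emph{decoding} $\theta$ rather than into the codes. The paper keeps your $\mathsfe{mktotal}$-style total encoding, with $\mathsfe{mktotal}\,f\,i\,\langle x,n\rangle = \succN v$ when $\mathsfe{seval}$ finds the value $v$ of $f_i x$ within $n$ steps and $0$ otherwise, but defines $\theta_c x$ by a $\mu$-search for an output of the form $\Some(\succN v)$, returning $v$ on success and $\mathsfe{undef}$ otherwise. Then $\theta_c x$ diverges exactly when the total function coded by $c$ only ever returns $0$ on the relevant inputs, and the coding function witnessing $\theta_{\gamma i}\equivwrt{\nat\scriptpfun\nat} f_i$ is simply the $\gamma$ that $\SCT$ yields for $\mathsfe{mktotal}\,f$ --- no manipulation of codes is required. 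With that change, the rest of your verification via $\mkstat$, $\mathsfe{seval}$, and stationarity goes through as you describe.
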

\begin{proofqed}
  The direction from left to right is by observing that
  there is a function $\mathsfe{mktotal} \of{(\nat\to\nat\pfun\nat)\to\nat\to\nat\to\nat}$ \st\ $f_i x \hasvalue v \leftrightarrow \exists n.\;\mathsfe{mktotal}\,f\, i\, \langle x, n \rangle = \succN v$ using $\mathsfe{seval}$.
  We then define
  \begin{align*}
  \theta c x := &(\mu (\lambda n.\;\iteis{\psi_c^n x}{\Some (\succN v)}{\ret \btrue}{\ret\bfalse})) \\ &\bind \lambda n.\;\iteis{\psi_c^n x}{\Some (\succN v)}{\ret v}{\mathsfe{undef}}
  \end{align*}
  The direction from right to left constructs $\psi_c^n x := \mathsfe{seval}\,(\theta_c y)\, n$.
  Let $f \of{\nat \to \nat\to\nat}$. Define the partial function $f'_i x := \ret (f_i x)$.
  Now a coding function $\gamma$ for $f'$ by $\EPF$ is a coding function for $f$ to establish $\SCT$.
\end{proofqed}

Instead of stating $\EPF$ as enumeration of partial functions, we can equivalently state it w.r.t.\ parameterised functional relations:

\begin{fact}[][EPF_iff]
  $\EPF$ is equivalent to the following:\small
  \[ \exists \theta \of{\nat\to(\nat\pfun\nat)}.\;\forall R \of {\nat \to \FunRel \nat \nat}.\;
    (\exists f.\forall i.\;f_i \textit{ computes } R_i) \to
    \exists \gamma.\forall i.\;\theta_{\gamma i} \textit{ computes } R_i
  \]
\end{fact}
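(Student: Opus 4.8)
The plan is to prove the equivalence by unfolding the definition of "$f_i$ computes $R_i$" for a functional relation $R_i \of \FunRell{\nat}{\nat}$, which (following the earlier setup for partial functions and functional relations) means $\forall x\, v.\; f_i x \hasvalue v \leftrightarrow R_i x v$. The key observation is that for \emph{functional} relations, a partial function computing $R_i$ and a partial function extensionally equal to it carry exactly the same information, so the quantifier structure of $\EPF$ and of the stated principle match up almost definitionally once the right translations are in place.

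For the direction from $\EPF$ to the stated principle, I would take the same $\theta$ provided by $\EPF$. Given $R \of \nat\to\FunRell{\nat}{\nat}$ together with a witness $f \of \nat\to\nat\pfun\nat$ such that each $f_i$ computes $R_i$, I feed $f$ into $\EPF$ to obtain $\gamma$ with $\theta_{\gamma i} \equivwrt{\nat\scriptpfun\nat} f_i$ for all $i$. Then for any $x, v$ we chain $\theta_{\gamma i} x \hasvalue v \leftrightarrow f_i x \hasvalue v \leftrightarrow R_i x v$, so $\theta_{\gamma i}$ computes $R_i$, as required.

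For the converse, given $\theta$ satisfying the stated principle and an arbitrary $f \of \nat\to\nat\pfun\nat$, I define $R_i x v \definedas f_i x \hasvalue v$. Each $R_i$ is a functional relation because $\hasvalue$ is functional on $\partial\nat$ (two values of the same partial element are equal — this uses that the underlying stationary-sequence representation forces uniqueness, or simply that $\hasvalue$ was set up to be deterministic), and $f_i$ itself witnesses $\exists f.\forall i.\; f_i$ computes $R_i$. Applying the principle yields $\gamma$ with each $\theta_{\gamma i}$ computing $R_i$, i.e.\ $\theta_{\gamma i} x \hasvalue v \leftrightarrow f_i x \hasvalue v$, which is exactly $\theta_{\gamma i} \equivwrt{\nat\scriptpfun\nat} f_i$. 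Hence $\EPF$ holds with the same $\theta$.

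The only real subtlety — and the step I would be most careful about — is confirming that "computes" for a functional relation is definitionally interchangeable with "is $\equivwrt{\nat\scriptpfun\nat}$-equal to a fixed computing function", which relies on functionality of $R_i$ in one direction and determinism of $\hasvalue$ in the other; neither direction needs choice. Everything else is just moving existentials past each other, so the proof should be short once that bookkeeping is pinned down.
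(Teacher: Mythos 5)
Your proof is correct and matches the intended argument (the paper defers this proof to the Coq development, where it is exactly this definitional unfolding): the forward direction chains $\theta_{\gamma i}x\hasvalue v\leftrightarrow f_i x\hasvalue v\leftrightarrow R_i x v$, and the backward direction instantiates $R_i$ with the graph of $f_i$, using determinism of $\hasvalue$ for functionality. You also correctly flag the only real point of care, namely that \enquote{computes} identifies a partial function with its graph so that computing a common functional relation coincides with $\equivwrt{\nat\scriptpfun\nat}$-equality.
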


\begin{theorem}[][EPF_to_CT_SMN]
  $\EPF \to \Sigma \psi.\;\CT_{\psi} \land \SMN_{\psi}\land \forall c\,x.\;\lambda n.\psi_c^n x \textit{ is stationary} $
\end{theorem}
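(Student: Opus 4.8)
The plan is to start from an $\EPF$-witness $\theta \of \nat \to (\nat \pfun \nat)$ and build the desired step-indexed interpreter $\psi$ by step-indexing $\theta$ with $\mathsfe{seval}$, exactly as in the right-to-left direction of \Cref{coq:EPF_to_SCT}: set $\psi_c^n x := \mathsfe{seval}\,(\theta_c x)\, n$. By the basic properties of $\mathsfe{seval}$ this $\psi$ is stationary in $n$ (the third conjunct), and $\psi_c^n x = \Some v$ for some $n$ is equivalent to $\theta_c x \hasvalue v$. So the remaining work is to produce, from this $\psi$, the two nontrivial conjuncts $\CT_\psi$ and $\SMN_\psi$.

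For $\CT_\psi$, given $f \of \nat\to\nat$ I would form the constant family $R_i := f$ of total functions (equivalently the parameterised partial function $f'_i x := \ret (f x)$), apply $\EPF$ to get a coding function $\gamma$, and take $c := \gamma 0$. Then $\theta_c \equivwrt{\nat\scriptpfun\nat} \ret \circ f$, so $\exists n.\;\psi_c^n x = \Some (f x)$ for every $x$, which is $\CT_\psi$. For $\SMN_\psi$ — and this is where the $\Sigma$ in the statement really bites — I need to \emph{construct} the operator $\sigma \of \nat\to\nat\to\nat$, not merely prove it exists. The idea is to apply $\EPF$ to a cleverly chosen \emph{binary-parameterised} family: consider the family indexed by a pairing $\langle c, x\rangle \of \nat$ given by $g_{\langle c,x\rangle}\, y := \theta_c \langle x, y\rangle$ (using $\bind$ on partial values to define this as an element of $\nat\pfun\nat$). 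This is of the form $\lambda \langle c,x\rangle.\,(\dots)$, hence a function $\nat \to (\nat\pfun\nat)$, so $\EPF$ hands us some $\gamma_0 \of \nat\to\nat$ with $\theta_{\gamma_0 \langle c,x\rangle} \equivwrt{\nat\scriptpfun\nat} g_{\langle c,x\rangle}$. Define $\sigma c x := \gamma_0 \langle c, x\rangle$. Unfolding the definitions of $\psi$ and $g$, the required equivalence
\[
  (\exists n.\;\psi_{\sigma c x}^n y = \Some v) \leftrightarrow (\exists n.\;\psi_c^n \langle x,y \rangle = \Some v)
\]
becomes $\theta_{\gamma_0\langle c,x\rangle} \langle y\rangle \hasvalue v \leftrightarrow \theta_c\langle x,y\rangle \hasvalue v$, which is exactly the $\EPF$-equivalence for the family $g$ at index $\langle c,x\rangle$ applied to input $y$. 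The only subtlety is that $\EPF$ quantifies $\gamma$ existentially, so to get $\sigma$ as an honest $\Sigma$ I use that $\EPF$'s witness $\theta$ is itself behind an $\exists$ but the coding function for this \emph{specific} family can be pulled out; more precisely, since the whole construction of $\sigma$ from $\gamma_0$ is first-class, I first destruct the $\exists\theta$ of $\EPF$ (legitimate, since the goal $\Sigma\psi.\,\dots$ lands in $\Type$ only through a $\Sigma$ whose components are ultimately constructions — here one has to be a little careful, and in practice the Coq development likely phrases $\EPF$'s coding-function clause so that $\gamma$ can be obtained constructively for each family, or uses that a uniform $\gamma$-operator exists).

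The main obstacle is precisely this bookkeeping about $\exists$ versus $\Sigma$: $\EPF$ as stated gives $\theta$ under $\exists$ and $\gamma$ under $\exists$, yet the theorem demands $\Sigma\psi.\;\CT_\psi\land\SMN_\psi\land\dots$, and $\SMN_\psi$ itself contains a $\Sigma\sigma$. So I must check that destructing these existentials is permissible — which it is, because the final payload, once we have picked $\theta$ and the coding functions, consists of a concrete term $\psi$ (built from $\theta$ and $\mathsfe{seval}$), a concrete $\sigma$ (built from a coding function), plus proofs of $\Prop$-valued statements about them; the $\Sigma$'s range over $\Type$ but their first components are genuine data obtained by elimination from $\exists$ into a goal that is itself a $\Sigma$, which is exactly the admissible pattern noted after \Cref{coq:computational_explosion}. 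Concretely I expect the proof to first note the equivalence of $\EPF$ with a parameterised-relation form (as in \Cref{coq:EPF_iff}), fix a $\theta$ with a coding operator, then define $\psi$, $\sigma$ as above and discharge the three conjuncts by straightforward unfolding, with the bulk of the writing being the verification of the $\SMN$ equivalence through the layers $\psi \to \mathsfe{seval} \to \theta \to \bind$.
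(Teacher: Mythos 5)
Your construction coincides with the paper's proof: the paper likewise sets $\psi_c^n x := \mathsfe{seval}\,(\theta_c x)\,n$, obtains $\CT_\psi$ via the constant-family argument of Theorem~\ref{coq:SCT_to_EPF}, and derives $\SMN_\psi$ by applying $\EPF$ to the family $f_{\langle c,x\rangle}\,y := \theta_c\langle x,y\rangle$ and setting $\sigma\, c\, x := \gamma\langle c,x\rangle$. The $\exists$-versus-$\Sigma$ bookkeeping you raise is a fair observation, but it is precisely the point the paper declares \enquote{largely cosmetic} in its remark following the definition of $\SMN$, so your proposal is essentially identical to the paper's argument.
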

\begin{proof}
  Let $\theta$ be given as in $\EPF$ and define $\psi^n_c x := \mathsfe{seval}\,(\theta_c y)\, n$, which allows proving $\CT_{\psi}$ as in Theorem \ref{coq:SCT_to_EPF}.
  Let furthermore $f_{\langle c,x \rangle} y := \theta c \langle x, y \rangle$ and $\gamma$ be a coding function for $f$ by $\EPF$.
  Define $S c x := \gamma \langle c, x  \rangle$.
  We have
  \begin{align*}
  \theta_{S c x} y \equiv \theta_{\gamma \langle c,x \rangle} y \equiv \theta c \langle x,y \rangle  \tag*{\qed}
  \end{align*}
\end{proof}

We introduce $\SCT_{\bool}$, postulating a step-indexed interpreter parametrically universal for $\ofbox{\nat\to\bool}$:
\begin{align*}
  \SCT_{\bool} := \Sigma \psi &\of{\nat \to \nat \to \nat \to \option \bool}. \\
                       &(\forall c x n_1 n_2 v.\;\psi_c^{n_1} x = \Some v \to n_2 \geq n_1 \to \psi_c^{n_2} x = \Some v) ~\land \\
                       &\forall f \of{\nat\to\nat \to \bool}.\;\exists \gamma.\forall i x.\exists n.\; \psi_{\gamma i}^{n} x = \Some (f_i x)
\end{align*}

$\SCT_{\bool}$ is equivalent to $\SCT$.
One direction is immediate since $\bool$ is a retract of $\nat$ (i.e.\ can be injectively embedded).
The other direction follows by mapping the infinite sequence $f 0, f 1, f 2, \dots$ to the sequence
\[ \bfalse^{f 0} \, \btrue \, \bfalse^{f 1} \, \btrue \, \bfalse^{f 2} \, \btrue \dots \]

\begin{theorem}[][SCT_bool_to_SCT]
  $\SCT_{\bool} \leftrightarrow \SCT$
\end{theorem}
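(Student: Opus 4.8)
The plan is to prove $\SCT_{\bool} \leftrightarrow \SCT$ by constructing step-indexed interpreters in both directions. The easy direction, $\SCT \to \SCT_{\bool}$, exploits that $\bool$ embeds into $\nat$ (say $\btrue \mapsto 1$, $\bfalse \mapsto 0$) with a retraction $r \of{\nat \to \bool}$. Given $\psi$ from $\SCT$, define $\psi'^n_c x := \mathsf{map}\,r\,(\psi^n_c x)$ (mapping $r$ through the option), which is again stationary. For a family $f \of{\nat\to\nat\to\bool}$, regard it as a family $\nat\to\nat\to\nat$ via the embedding, obtain $\gamma$ from $\SCT$, and check that $\psi'^n_{\gamma i} x = \Some (f_i x)$ for large enough $n$, using that $r$ inverts the embedding.

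**The hard direction.**
For $\SCT_{\bool} \to \SCT$, I would first fix the decoding scheme: a stationary $\bool$-valued sequence $b_0, b_1, b_2, \dots$ that (on a defined input) looks like $\bfalse^{k_0}\,\btrue\,\bfalse^{k_1}\,\btrue\,\bfalse^{k_2}\,\btrue\,\cdots$ encodes the function $j \mapsto k_j$ by reading off the gaps between consecutive $\btrue$'s. Concretely, given $\psi_\bool$ from $\SCT_\bool$ and a target family $f \of{\nat\to\nat\to\nat}$, define the auxiliary family $g \of{\nat\to\nat\to\bool}$ so that the sequence $g_i 0, g_i 1, g_i 2, \dots$ is exactly $\bfalse^{f_i 0}\,\btrue\,\bfalse^{f_i 1}\,\btrue\,\cdots$; this $g$ is a genuine function of $i$ and the position, definable by recursion/arithmetic (computing which block a position falls into and whether it is a $\btrue$-separator or a $\bfalse$-padding bit). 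Apply $\SCT_\bool$ to $g$ to get a coding function $\gamma$. Then define $\psi^n_c x$ to: run $\psi_\bool$ on code $c$ scanning positions $0, 1, \dots$ within step budget $n$, count the $\btrue$'s seen and the $\bfalse$'s since the last $\btrue$, and once the $(x{+}1)$-th $\btrue$ has appeared, return $\Some v$ where $v$ is the number of $\bfalse$'s in the $x$-th block; otherwise return $\None$. Stationarity of $\psi$ follows from stationarity of $\psi_\bool$ together with the monotone "once we have enough budget to see the $(x{+}1)$-th $\btrue$, the answer is fixed" argument. Finally $\gamma$ (possibly postcomposed with a trivial adjustment) serves as the coding function witnessing the universality clause of $\SCT$ for $f$.

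**Main obstacle.**
The delicate point is the plumbing of step indices: $\SCT_\bool$ only guarantees that for each position $j$ there exists some budget $n_j$ with $\psi_{\bool,\gamma i}^{n_j} j = \Some (g_i j)$, and these budgets are not uniform in $j$. So when computing $\psi^n_c x$ I must scan positions $0$ through (at least) the index of the $(x{+}1)$-th $\btrue$, each of which needs its own budget; the fix is to use a single budget $n$ uniformly as "run each of the first $n$ positions for $n$ steps" and invoke stationarity to conclude that a large enough single $n$ suffices to see the whole prefix up to the $(x{+}1)$-th $\btrue$. Making this monotonicity/stationarity bookkeeping precise — and ensuring the decoder is total and well-defined even on inputs that are not yet (or never will be) of the expected $\btrue$-separated shape — is the only real work; everything else is routine arithmetic on the pairing and block structure. $\qed$
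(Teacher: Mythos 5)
Your proposal is correct and follows exactly the route the paper indicates: the forward direction via the retraction of $\bool$ into $\nat$, and the converse via encoding $f_i$ as the boolean sequence $\bfalse^{f_i 0}\,\btrue\,\bfalse^{f_i 1}\,\btrue\,\cdots$ and decoding by counting separators, with stationarity handling the step-index bookkeeping. The paper gives only this sketch and no further detail, so your elaboration of the uniform-budget argument is a faithful filling-in of the same proof.
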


We define the \introterm{parametric enumerability of partial boolean functions} axiom 
\[\EPF_{\bool} := \Sigma \theta \of{\nat \to (\nat \pfun \bool)}.\forall f \of{\nat\to\nat \pfun \bool}. \exists \gamma \of \nat \to \nat.\forall i.\;\theta_{\gamma i} \equivwrt{\nat \scriptpfun \bool} f_i \]

Recall that $\theta_{\gamma i} \equivwrt{\nat \scriptpfun \bool} f_i$ if and only if $\forall x v.\;\theta_{\gamma i} x \hasvalue v \leftrightarrow f_i x \hasvalue v$.
Proving $\EPF_{\bool}$ equivalent to $\SCT$ is easiest done by proving the following:

\begin{theorem}[][EPF_bool_to_SCT_bool]
  $\EPF_{\bool} \leftrightarrow \SCT_\bool$
\end{theorem}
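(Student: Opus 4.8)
The plan is to mirror the proof of \Cref{coq:SCT_to_EPF} (the equivalence $\SCT \leftrightarrow \EPF$), but staying in the boolean-valued world and using $\mathsfe{seval}$ to move between step-indexed interpreters and partial functions. The direction $\EPF_\bool \to \SCT_\bool$ is the routine one: given $\theta$ from $\EPF_\bool$, set $\psi_c^n x := \mathsfe{seval}\,(\theta_c x)\, n$, whose stationarity in $n$ is a property of $\mathsfe{seval}$. For a total family $f \of{\nat\to\nat\to\bool}$ put $f'_i x := \ret (f_i x)$ and let $\gamma$ be the coding function that $\EPF_\bool$ supplies for $f'$, so $\theta_{\gamma i} x \hasvalue f_i x$ for all $i$ and $x$. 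The specification of $\mathsfe{seval}$ then gives, for each $i$ and $x$, some $n$ with $\psi_{\gamma i}^n x = \Some (f_i x)$, which is exactly the universality clause of $\SCT_\bool$.

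For $\SCT_\bool \to \EPF_\bool$, let $\psi$ be given by $\SCT_\bool$. The obstacle is that $\SCT_\bool$ only yields \emph{total} boolean step-indexed functions, while $\theta$ must be able both to diverge and to return either boolean value --- three behaviours that do not fit into one boolean. I would split the simulation of a partial boolean function over two channels. Define $\theta_c x$ to $\mu$-search, dovetailing over a step count, for the least code $\langle b, n, k \rangle$ with $\psi_c^k\,\langle x, \langle b, n \rangle\rangle = \Some \btrue$, and return $\btrue$ if the witness has channel bit $b = 0$ and $\bfalse$ if $b = 1$ (diverging when no such code exists). Given $f \of{\nat\to\nat \pfun \bool}$, feed $\SCT_\bool$ the \emph{total} family $g$ with $g_i\,\langle x, \langle 0, n \rangle\rangle := \ite{\mathsfe{seval}\,(f_i x)\, n = \Some \btrue}{\btrue}{\bfalse}$ and $g_i\,\langle x, \langle 1, n \rangle\rangle := \ite{\mathsfe{seval}\,(f_i x)\, n = \Some \bfalse}{\btrue}{\bfalse}$, obtaining a coding function $\gamma$ with $\psi_{\gamma i}$ computing $g_i$.

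It then remains to verify $\theta_{\gamma i} \equivwrt{\nat\scriptpfun\bool} f_i$. Since $\psi_{\gamma i}$ computes the total $g_i$, channel $0$ eventually exhibits $\Some \btrue$ at index $n$ exactly when $\mathsfe{seval}\,(f_i x)\, n = \Some \btrue$, i.e.\ when $f_i x \hasvalue \btrue$, and symmetrically channel $1$ tracks $f_i x \hasvalue \bfalse$. Because a partial value cannot have two distinct values, at most one of the two channels ever produces a witness, so the least code found by the $\mu$-search determines the correct boolean whenever $f_i x$ is defined, and the search diverges exactly when $f_i x$ is undefined; hence $\theta_{\gamma i} x \hasvalue v \leftrightarrow f_i x \hasvalue v$. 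The main obstacle is precisely this encoding: reconciling the three behaviours of a partial boolean value with the two-valued total interpreter handed to us by $\SCT_\bool$, and checking the dovetailing search against the functionality of $\hasvalue$. One could instead route the hard direction through \Cref{coq:SCT_bool_to_SCT} and \Cref{coq:SCT_to_EPF} and cut $\EPF$ down to $\EPF_\bool$ by post-composing $\theta$ with the evident map $(\nat\pfun\nat) \to (\nat\pfun\bool)$, but the direct argument stays closer to the earlier proofs.
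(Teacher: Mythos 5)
Your proposal is correct and takes essentially the same route as the paper, whose own proof is just the remark that one argues exactly as for $\SCT \leftrightarrow \EPF$: your easy direction via $\mathsfe{seval}$ and $f'_i x := \ret(f_i x)$ is identical, and your two-channel encoding is the right boolean adaptation of the paper's $\mathsfe{mktotal}$-plus-$\mu$-search construction, where the $\succN v$-versus-$0$ output tag of the $\nat$ case must migrate into the input because $\option\,\bool$ does not embed into $\bool$. The only cosmetic loose end is that $g_i$ should also be defined (say, as $\bfalse$) on inputs whose channel component is neither $0$ nor $1$, since the pairing function is surjective.
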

\begin{proofqed}
  Exactly as in Theorem \ref{coq:EPF_to_SCT}.
\end{proofqed}

Using $\EPF_{\bool}$ it is easy to establish an enumerable, undecidable problem:

\begin{fact}[][EPF_halting]
  $\EPF_{\bool} \to \Sigma p \of{\nat\to\Prop}.\;\enumerable p \land \neg \enumerable \compl p \land \neg \decidable p$ 
\end{fact}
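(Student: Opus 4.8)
The plan is to mimic the construction from Lemma~\ref{coq:CT_halting}, but working directly with the enumeration $\theta$ of partial boolean functions supplied by $\EPF_{\bool}$, so that all the computability notions come out in terms of plain functions. First I would fix $\theta$ together with the coding operator guaranteed by $\EPF_{\bool}$, and define the self-halting predicate
\[ p\,c := \theta_c\,c \hasvalue \btrue. \]
Semi-decidability of $p$ is immediate from the step-indexed evaluator: $\mathsfe{seval}\,(\theta_c\,c)\,n = \Some\,\btrue$ is a decidable relation in $c$ and $n$, and by the characterisation of $\hasvalue$ via $\mathsfe{seval}$ we get $p\,c \leftrightarrow \exists n.\;\mathsfe{seval}\,(\theta_c\,c)\,n = \Some\,\btrue$, so $\semidecidable p$; since $\nat$ is enumerable, Fact~\ref{coq:semi_decidable_enumerable} upgrades this to $\enumerable p$.

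For $\neg\enumerable\compl p$: suppose $\compl p$ were enumerable. Since $\nat$ is discrete, Fact~\ref{coq:enumerable_semi_decidable} makes $\compl p$ semi-decidable, and by Fact~\ref{coq:semi_decidable_part_iff}(2) there is a partial function $g \of \nat \pfun Y$ with $\neg p\,c \leftrightarrow \exists y.\;g\,c \hasvalue y$. Turning this into a partial \emph{boolean} function $f_i\,x := (g\,x) \bind \lambda\_.\;\ret\,\btrue$ (ignoring the parameter $i$, i.e.\ a constant family), $\EPF_{\bool}$ yields a code $\gamma\,0$ with $\theta_{\gamma 0} \equivwrt{\nat\scriptpfun\bool} f_0$. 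Write $c_0 := \gamma\,0$. Then
\[ p\,c_0 \leftrightarrow \theta_{c_0}\,c_0 \hasvalue \btrue \leftrightarrow f_0\,c_0 \hasvalue \btrue \leftrightarrow \exists y.\;g\,c_0 \hasvalue y \leftrightarrow \neg p\,c_0, \]
a contradiction. Hence $\neg\enumerable\compl p$.

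Finally $\neg\decidable p$: if $p$ were decidable, then $\compl p$ would be decidable by Fact~\ref{coq:dec_compl}, hence semi-decidable by Fact~\ref{coq:decidable_semi_decidable}, hence enumerable (again using $\nat$ enumerable), contradicting the previous step; so $\neg\decidable p$. Assembling the three conjuncts and packaging $p$ with $\Sigma$ — which is legitimate since $p$ is given explicitly and $\EPF_{\bool}$ itself provides $\theta$ and $\gamma$ as data — completes the proof. I expect the only mildly delicate point to be the bridge between the $\exists y.\;g\,c \hasvalue y$ formulation and a \emph{boolean}-valued partial function suitable for $\EPF_{\bool}$; using the monadic $\bind$ to discard the witness and return $\btrue$ handles it cleanly, and the stationary/$\mathsfe{seval}$ bookkeeping for semi-decidability is routine given the facts already in the excerpt.
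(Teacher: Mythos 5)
Your proposal is correct and follows essentially the same route as the paper: the same diagonal predicate (the paper uses $\K c := \exists v.\;\theta_c c \hasvalue v$ where you use $\theta_c c \hasvalue \btrue$, an immaterial difference since the diagonalised function only ever returns $\btrue$), the same passage through Fact~\ref{coq:semi_decidable_part_iff}(2), and the same $\bind \lambda\_.\;\ret\btrue$ conversion to a partial boolean function before invoking universality of $\theta$. The only organisational difference is that the paper factors the argument through ``$\compl\K$ is not semi-decidable'' and derives both remaining conjuncts from that, whereas you derive them in sequence; this changes nothing substantive.
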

\begin{proof}
  Let $\theta$ be given as in $\EPF$.
  Define $\K c := \exists v. \theta_c c \hasvalue v$.
  $\K$ is semi-decided by $\lambda c n. \iteis{\mathsfe{seval} (\theta_c c) n}{\Some v}{\btrue}{\bfalse}$ and thus enumerable by Fact \ref{coq:semi_decidable_enumerable}.
  
  We prove that $\compl {\K}$ is not semi-decidable, yielding both $\neg\enumerable{\compl {\K}}$ and $\neg \decidable \K$ by Fact \ref{coq:semi_decidable_enumerable}.
  Let $\compl {\K}$ be semi-decidable, i.e.\ by Fact \ref{coq:semi_decidable_part_iff} (2) there is $f \of \nat\pfun Y$ \st\ $\neg \K x \leftrightarrow \exists y.\; f x \hasvalue y$.
  Define~$f'\of{\nat\pfun\bool}$ as $f' x := f x \bind \lambda \_.\; \ret \btrue$.
  Now $f'$ has a code $c$ \st\ $\forall x.\;e c x \hasvalue f' x$ by universality of $\theta$.

  We have a contradiction via
  \begin{align*}
    \neg \K c  \leftrightarrow (\exists y.\; f c \hasvalue y) \leftrightarrow (\exists y.\;f' x \hasvalue y) (\exists y. e c c \hasvalue y) \leftrightarrow \K c. \tag*{\qed}
  \end{align*}
\end{proof}

\section{The Enumerability Axiom}
\label{sec:EA}\label{sec:EAS}

Using $\EPF$ or $\SCT$ as basis for synthetic computability requires the manipulation of partial functions or stationary functions, which is tedious.
Alternatively, 
synthetic computability can be presented even more elegantly by an equivalent axiom concerned with enumerable predicates rather than partial functions.
A non-parametric enumerability axiom is used by Bauer~\cite{BauerSyntCT} together with countable choice to develop synthetic computability results.

We introduce the \introterm{parametric enumerability axiom} postulating an enumerator $\varphi \of \nat\to(\nat\to\option\nat)$
which is parametrically universal for all parametrically enumerable predicates~$p \of {\nat\to\nat\to\Prop}$:%
\small
\label{def:EA}\label{introterm:EAS}%
  \begin{align*}
    \EAS := &\Sigma \varphi\of{\nat \to (\nat \to \option\nat)}.\forall (p \of{\nat\to\nat \to \Prop}). \\
            &\quad(\exists (f \of \nat\to\nat\to\option\nat).\forall i.\;f_i \textit{ enumerates } p_i) \to \exists \gamma \of \nat\to\nat.\forall i.\; \varphi_{\gamma i} \textit{ enumerates } p_i
\end{align*}
\normalsize

That is, $\EA$ states that whenever $p$ is parametrically enumerable,
then $\lambda i.\;\varphi_{\gamma i}$ parametrically enumerates $p$ for some $\gamma$.

Note the two different roles of natural numbers in the axiom:
If we would consider predicates over a general type $X$ we would have $\varphi \of{\nat \to (\nat \to \option X)}$.

Equivalently, we could have required that $p$ is enumerable:

\begin{fact}[][EA_iff_enumerable]
  $\EAP$ is equivalent to \[ \Sigma \varphi.\;\forall p \of{\nat\to\nat\to\Prop}.\;\enumerable (\lambda (x,y).\;p (x,y)) \to \exists \gamma \of{\nat\to\nat}.\forall i.\;\varphi_{\gamma i} \textit{ enumerates } p i.\]
\end{fact}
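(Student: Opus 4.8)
The plan is to prove both implications by a direct translation between a "parametric enumerator" (with coding function indexed by $i$) and an ordinary enumerator of a single predicate over $\nat$, using the standard currying/pairing trick already used several times in this section. The key observation is that a parametric family $p \of \nat\to\nat\to\Prop$ and a single predicate $q \of \nat\to\Prop$ correspond to each other via $q\,\langle i,x\rangle \definedas p\,i\,x$, and this correspondence respects enumerability because of the pairing function with its inverse destructuring $\lambda\langle n,m\rangle.\,f\,n\,m$ from \cref{def:pairing}.

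First I would prove the left-to-right direction. Assume $\EAP$, so we have $\varphi$ as stated. To establish the displayed statement we must, given $p \of\nat\to\nat\to\Prop$ with $\enumerable(\lambda(x,y).\,p\,(x,y))$ — here I read the bound variable $(x,y)$ as a single natural number decomposed by pairing, so this says $\enumerable\,q$ where $q\,\langle i,x\rangle := p\,i\,x$ — produce $\gamma$ with each $\varphi_{\gamma i}$ enumerating $p\,i$. The move is to feed to $\EAP$ the parametric predicate $p$ itself: from a single enumerator $g \of\nat\to\option\nat$ of $q$ one builds a parametric enumerator $f$ with $f_i$ enumerating $p\,i$, namely $f_i\,n := \match{g\,n}{\Some\langle j,x\rangle \Rightarrow \text{if } j = i \text{ then } \Some x \text{ else } \None \mid \None \Rightarrow \None}$, which one checks enumerates $p\,i$ since $g$ enumerates $q$ and $q\,\langle i,x\rangle = p\,i\,x$. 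Applying $\EAP$ to this $f$ yields the desired $\gamma$ directly.

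For the converse, assume the displayed axiom holds with witness $\varphi$; I claim the same $\varphi$ works for $\EAP$. Given $p\of\nat\to\nat\to\Prop$ parametrically enumerable via $f$ (so each $f_i$ enumerates $p\,i$), I must produce $\gamma$ with $\varphi_{\gamma i}$ enumerating $p\,i$. It suffices to exhibit a single enumerator of $q := \lambda\langle i,x\rangle.\,p\,i\,x$ and then invoke the displayed axiom, which returns exactly such a $\gamma$. A single enumerator of $q$ is obtained from $f$ by the usual dovetailing: enumerate all pairs $\langle i,n\rangle$, and at step coding $\langle i,n\rangle$ output $f_i\,n$ paired back up with $i$, i.e.\ $g\,\langle i,n\rangle := \match{f_i\,n}{\Some x \Rightarrow \Some\langle i,x\rangle \mid \None \Rightarrow \None}$; one checks $\exists m.\,g\,m = \Some\langle i,x\rangle \leftrightarrow \exists n.\,f_i\,n = \Some x \leftrightarrow p\,i\,x = q\,\langle i,x\rangle$.

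The main obstacle is not mathematical depth but bookkeeping: making the coercion between "predicate over pairs" and "parametric predicate" fully precise, and checking that the constructed $f$ (resp.\ $g$) genuinely satisfies the enumeration equivalence, including the $\None$ cases and the decidability of the test $j = i$ on $\nat$. One must also be slightly careful that $\EAP$ is stated with $\Sigma\varphi$ rather than $\exists\varphi$, so both directions share the \emph{same} $\varphi$ and the equivalence is witnessed constructively, which the argument above respects since $\gamma$ is only ever existentially quantified and is passed through unchanged.
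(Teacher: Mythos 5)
Your proof is correct and takes the route the paper intends (the fact is stated without a written proof in the text, but the only content is exactly what you identify): both directions reduce to the interderivability of the two hypotheses, translating between a parametric enumerator $f$ with each $f_i$ enumerating $p\,i$ and a single enumerator of the paired-up predicate via the pairing bijection, its injectivity, and a decidable equality test on $\nat$, after which the conclusion is obtained verbatim from the respective assumption with the same $\varphi$. Nothing is missing.
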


Again equivalently, $\EA$ can be stated to only mention enumerators instead of predicates, which is the formulation of $\EA$ used in~\cite{forster2020churchs}.
\begin{fact}[][EA_ran_iff]\label{lem:EA_ran_iff}
  $\EA \leftrightarrow \Sigma \varphi \of{\nat \to (\nat \to \option\nat)}.\forall f \of{\nat \to \nat \to \option \nat}. \exists \gamma.\forall x.\; \varphi_{\gamma x} \equiv_{\textsf{ran}} f x$
\end{fact}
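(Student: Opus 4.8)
The plan is to prove both directions with the \emph{same} enumerator $\varphi$, so that the equivalence reduces to unfolding definitions. The key step I would isolate first is a bridge observation: for any $g \of{\nat \to \nat \to \option\nat}$, the parametric predicate $p\,i\,y \definedas \exists n.\;g\,i\,n = \Some y$ is parametrically enumerated by $g$ itself, and an arbitrary $h \of{\nat \to \option\nat}$ enumerates this $p_i$ \emph{iff} $h \equivwrt{\textsf{ran}} g\,i$. Both halves hold simply by unfolding ``enumerates'' and the definition of range equivalence (and using that $\equivwrt{\textsf{ran}}$ is symmetric); in fact ``$h$ enumerates $p_i$'' and ``$h \equivwrt{\textsf{ran}} g\,i$'' are literally the same proposition up to commuting an $\leftrightarrow$.

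For $\EA \to (\dots)$, I would take $\varphi$ from $\EA$; given $f \of{\nat \to \nat \to \option\nat}$, instantiate $\EA$ with $p\,i\,y \definedas \exists n.\;f\,i\,n = \Some y$, which $f$ parametrically enumerates, to obtain $\gamma$ with $\varphi_{\gamma i}$ enumerating $p_i$ for all $i$; by the bridge this is exactly $\forall i.\;\varphi_{\gamma i} \equivwrt{\textsf{ran}} f\,i$. For $(\dots) \to \EA$, I would again keep the same $\varphi$: given a parametrically enumerable $p$, first eliminate the hypothesis $\exists f.\;\forall i.\;f_i$ enumerates $p_i$ --- permissible since the goal $\exists \gamma.\forall i.\;\varphi_{\gamma i}$ enumerates $p_i$ is a proposition --- to get a concrete parametric enumerator $f$; then apply the right-hand statement to $f$, obtaining $\gamma$ with $\varphi_{\gamma i} \equivwrt{\textsf{ran}} f\,i$, and chain this with ``$f_i$ enumerates $p_i$'' to conclude ``$\varphi_{\gamma i}$ enumerates $p_i$''.

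I expect essentially no real obstacle: the $\Sigma$ on $\varphi$ appearing on both sides is handled by eliminating it to access $\varphi$ and repackaging with the other property, and the only point to watch is that every $\exists$-elimination used --- extracting the parametric enumerator from $\EA$'s premise, and extracting $\gamma$ from the hypothesis --- targets a $\Prop$, which it does. The whole argument is definitional, with the bridge observation carrying all the (minimal) content.
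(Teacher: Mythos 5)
Your argument is correct and is exactly the routine definitional unfolding the paper has in mind (it states this fact without proof): the bridge observation that ``$h$ enumerates $\lambda y.\exists n.\,g\,i\,n = \Some y$'' and ``$h \equivwrt{\textsf{ran}} g\,i$'' coincide up to commuting an $\leftrightarrow$ carries the whole equivalence, and you correctly keep the same $\varphi$ in both directions and check that every $\exists$-elimination lands in $\Prop$. Nothing to add.
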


In this formulation, $\varphi$ is a surjection w.r.t.\ range equivalence $f \equiv_{\textsf{ran}} g$, where $\varphi_c \equiv_{\textsf{ran}} f \leftrightarrow \forall x.(\exists n.\varphi_c n = \Some x) \leftrightarrow (\exists n. f n = \Some x)$.

\newcommand\W{\mathcal{W}}
Given $\varphi$, we define $\W_c x := \exists n.\; \varphi_c n = \Some x$
and the problem $\K$ as the diagonal of $\W$, i.e.\ $\K c := \W_c c$.
We call $\W$ a \introterm{universal table}.
One can show that $\W$ and $\K$ are $m$-equivalent, and both are $m$-complete.
For now we only use $\K$ to note the following result:

\begin{lemma}[][EA_halting]
  $\EA \to \Sigma p \of \nat \to \Prop.\;\enumerable p \land \neg \enumerable \compl p \land \neg \decidable p \land \neg \decidable \compl p$
\end{lemma}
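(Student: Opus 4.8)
The plan is to instantiate $p$ with the diagonal of the universal table, i.e.\ $\K c \definedas \W_c c = \exists n.\;\varphi_c\,n = \Some c$, where $\varphi$ is the parametrically universal enumerator supplied by $\EA$. This mirrors the constructions used for $\CT_\psi$ in Lemma~\ref{coq:CT_halting} and for $\EPF_\bool$ in Fact~\ref{coq:EPF_halting}, only now carried out on the level of enumerators rather than partial functions. So I would open the proof by obtaining $\varphi$ from $\EA$ and defining $\W$ and $\K$ as above.

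Next I would show $\enumerable\K$: since $\nat$ is discrete, the map $\lambda c\,n.\;$ returning $\btrue$ exactly when $\varphi_c\,n = \Some c$ is a semi-decider for $\K$, so $\K$ is semi-decidable and hence enumerable by Fact~\ref{coq:semi_decidable_enumerable}, as $\nat$ is an enumerable type. The heart of the argument is then $\neg\enumerable\compl\K$: assuming $\compl\K$ is enumerated by some $g\of{\nat\to\option\nat}$, the constant family $\lambda i.\;\compl\K$ is parametrically enumerable (witnessed by $\lambda i.\;g$), so $\EA$ yields $\gamma\of{\nat\to\nat}$ with $\varphi_{\gamma i}$ enumerating $\compl\K$ for every $i$. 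Putting $c := \gamma\,0$, the defining equivalences of $\W$, of $\K$, and of ``$\varphi_c$ enumerates $\compl\K$'' chain into
\[ \neg\K c \;\leftrightarrow\; \compl\K c \;\leftrightarrow\; (\exists n.\;\varphi_c\,n = \Some c) \;\leftrightarrow\; \W_c c \;\leftrightarrow\; \K c, \]
a contradiction. (Equivalently, one may feed the reformulation of $\EA$ from Fact~\ref{coq:EA_iff_enumerable}.) Finally, $\neg\decidable\K$ and $\neg\decidable\compl\K$ come for free: by Fact~\ref{coq:semi_decidable_enumerable} a decidable predicate is semi-decidable and co-semi-decidable, so either $\decidable\K$ or $\decidable\compl\K$ would make $\compl\K$ semi-decidable, hence enumerable on $\nat$, contradicting the previous step.

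I expect no real obstacle; the only point needing a little care is feeding $\EA$ its parametric input correctly, since $\EA$ only produces a $\varphi$-code for $\compl\K$ once $\compl\K$ has been exhibited as a member of a parametrically enumerable family — here this just amounts to repackaging a hypothetical enumerator of $\compl\K$ as a constant family. Everything else is routine unfolding of the definitions of $\W$ and $\K$ together with the basic closure facts from the preliminaries.
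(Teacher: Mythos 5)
Your proposal is correct and follows essentially the same route as the paper: pick $\K c := \W_c c$, enumerate it directly from $\varphi$, refute enumerability of $\compl\K$ by obtaining a $\varphi$-code for it via $\EA$ (applied to the constant family) and diagonalising, and derive the undecidability claims from Fact~\ref{coq:semi_decidable_enumerable}. The only cosmetic difference is that the paper writes down an explicit enumerator for $\K$ where you pass through semi-decidability.
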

\begin{proofqed}
  We pick $p$ as $\K c := \W_c c$.
  $\K$ is enumerated by \[\lambda \langle c,m \rangle.\;\iteis{\varphi_c m}{\Some x}{\ite{x =_{\bool} c}{\Some c}{\None}}{\None}.\]
  If $\compl {\K}$ would be enumerable, there would be a code $c$ s.t.\ $\forall x.\;\W_c x \leftrightarrow \compl {\K} x$.
  In particular $\W_c c \leftrightarrow \compl{\K} c \leftrightarrow \neg \W_c c$.
  Contradiction. Then $ \neg \decidable p \land\neg \decidable \compl p$ is trivial.
\end{proofqed}

Similarly to how $\SCT$ can be reformulated
by letting $\psi$ be universal for unary functions and introducing an explicit $S^1_1$-operator, $\EA$ can also be stated in this fashion, with an $S^1_1$-operator w.r.t.\ $\W$.
\begin{lemma}[][EA_to_EA_star]
  $\EAS$ is equivalent to \[ \Sigma \varphi. (\forall p.\enumerable p \to \exists c.\varphi_c \textit{ enumerates } p) \land \Sigma \sigma\of{ \nat \to \nat \to \nat}.\forall c x y.\W_{ (\sigma c x)} y \leftrightarrow \W_c \langle x, y \rangle.\]
\end{lemma}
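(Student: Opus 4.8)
The plan is to prove the two directions of the equivalence separately. For the direction from the $S^1_1$-formulation to $\EAS$, I would take the given $\varphi$ together with the plain (non-parametric) universality property and the $S^1_1$-operator $\sigma$, and recover parametric universality exactly as in the proof of $\SCT$ from $\CT_\psi \land \SMN_\psi$ (Theorem~\ref{coq:CT_SMN_to_SCT}(2)). Concretely, suppose $p \of{\nat\to\nat\to\Prop}$ is parametrically enumerable, witnessed by $f \of{\nat\to\nat\to\option\nat}$ with $f_i$ enumerating $p_i$. First uncurry: the predicate $q\,\langle i,y\rangle := p_i\, y$ is enumerable via some $g \of{\nat\to\option\nat}$ (standard: dovetail over $i$ and $n$, and output $\Some\langle i, x\rangle$ whenever $f_i\, n = \Some x$). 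By plain universality of $\varphi$ there is a code $c$ with $\varphi_c$ enumerating $q$, i.e. $\W_c\langle i, y\rangle \leftrightarrow p_i\, y$ (up to the obvious packaging of $\langle i,x\rangle$-pairs). Then set $\gamma i := \sigma\, c\, i$; the $S^1_1$-property gives $\W_{\gamma i}\, y \leftrightarrow \W_c\langle i, y\rangle \leftrightarrow p_i\, y$, so $\varphi_{\gamma i}$ enumerates $p_i$, which is what $\EAS$ requires.

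For the direction from $\EAS$ to the $S^1_1$-formulation, I would first note that plain universality is immediate from parametric universality, by the same trivialisation used in Theorem~\ref{coq:CT_SMN_to_SCT}(3): given enumerable $p$ with enumerator $f$, form the constant family $p'_i := p$, $f'_i := f$, obtain $\gamma$ from $\EAS$, and take $\gamma 0$ as a code for $p$. The real content is producing $\sigma$. Here I would apply $\EAS$ to a cleverly chosen parametrically enumerable predicate $p$ indexed by pairs $\langle c, x\rangle$, namely $p_{\langle c, x\rangle}\, y := \W_c\langle x, y\rangle$, i.e. $\exists n.\;\varphi_c\, n = \Some\langle x, y\rangle$ (after packaging). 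This family is parametrically enumerable: an enumerator $f_{\langle c,x\rangle}$ on input $m$ runs $\varphi_c\, m$, and if the result is $\Some z$ and $z$ decodes as $\langle x', y\rangle$ with $x' = x$, outputs $\Some y$, else $\None$. Applying $\EAS$ yields $\gamma$ with $\varphi_{\gamma\langle c,x\rangle}$ enumerating $p_{\langle c,x\rangle}$, i.e. $\W_{\gamma\langle c,x\rangle}\, y \leftrightarrow \W_c\langle x, y\rangle$; set $\sigma\, c\, x := \gamma\langle c, x\rangle$ and we are done. Note this uses that $\gamma$ is obtained via $\Sigma$ in $\EAS$, so that $\sigma$ is an actual function, matching the $\Sigma$ in the target statement.

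The main obstacle I expect is purely bookkeeping: carefully threading the pairing/decoding functions through both the uncurrying step and the definition of the auxiliary enumerators, and making sure ``$\varphi_c$ enumerates $q$'' translates correctly into statements about $\W_c$ when $q$ lives on a type of pairs that must be encoded back into $\nat$ (since $\varphi$ targets $\nat\to\option\nat$). In particular one must check that the decoding of $\langle x,y\rangle$-style pairs is decidable so the $\kw{if}\dots\kw{is}\dots$ tests in the auxiliary enumerators are legitimate, which holds since $\nat$ and hence pairs of naturals are discrete. None of these steps is mathematically deep; the difficulty is entirely in keeping the encodings consistent, and it mirrors the already-established correspondence between $\SCT$ and $\CT_\psi \land \SMN_\psi$.
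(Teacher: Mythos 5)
Your proposal matches the paper's proof essentially step for step: the right-to-left direction goes via uncurrying (the equivalence of $\EA$ with the formulation where the whole pair predicate is enumerable) followed by $\gamma\,i := \sigma\,c\,i$, and the left-to-right direction uses the constant family to get plain universality and applies $\EA$ to the family $p_{\langle c,x\rangle}\,y := \mathcal{W}_c\langle x,y\rangle$ to define $\sigma\,c\,x := \gamma\langle c,x\rangle$, exactly as in the paper. One small correction: in the paper's definition of $\EA$ the coding function $\gamma$ is $\exists$-quantified rather than $\Sigma$-quantified, so the passage to the $\Sigma$-quantified $\sigma$ in the target statement is a subtlety that the paper's own proof also elides, not something your argument can lean on as stated.
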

\begin{proofqed}
  The direction from right to left is straightforward using Lemma \ref{coq:EA_iff_enumerable}.  

  For the direction from left to right, let $\varphi$ be given.
  
  For the first part of the conclusion
  let $p$ be given and enumerable.
  Then $\lambda x y.\; p y$ is parametrically enumerable, so let $\gamma$ be given from $\EA$.
  Then $\varphi_{{ \gamma 0}}$ enumerates $p$.
  For the second part, let $p \langle c,x \rangle y := \exists n. \varphi_c n = \Some \langle x,y \rangle $.
  Since $p$ is enumerable, by Lemma \ref{coq:EA_iff_enumerable} and $\EAS$ there is $\gamma$ \st $\varphi_{\gamma \langle c,x \rangle}$ enumerates $p \langle c,x \rangle$.
  Now $S c x := \gamma \langle c,x \rangle$ is the wanted function.
\end{proofqed}

$\SCT$ and $\EA$ are equivalent.
For the forwards direction, we show that enumerators $\nat\to\option\nat$ can be equivalently given as functions $\nat\to\nat$.

\begin{theorem}[][SCT_to_EA]
  $\SCT \to \EA$
\end{theorem}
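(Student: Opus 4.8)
The plan is to extract from $\SCT$ a step-indexed interpreter $\psi$ that is parametrically universal for $\nat\to\nat$ (in particular, by Theorem~\ref{coq:EPF_to_SCT} and Theorem~\ref{coq:EPF_to_CT_SMN} we also get the $S^m_n$-operator $\sigma$ along the way), and then to cook up an enumerator $\varphi \of{\nat\to(\nat\to\option\nat)}$ out of $\psi$ together with a coding function $\gamma$ as required by $\EA$. The key observation is that an enumerator $f_i \of{\nat\to\option\nat}$ for a predicate $p_i$ can be coerced into a total function $\nat\to\nat$ by shifting into $\succN$-values, i.e.\ encode $\Some x$ as $\succN x$ and $\None$ as $0$; conversely, a function $g\of{\nat\to\nat}$ can be read back as an enumerator $\lambda n.\;\iteis{g n}{\succN x}{\Some x}{\None}$. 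So I would define $\varphi_c\,n := \iteis{\psi_c^{\,n}\,n}{\Some (\succN x)}{\Some x}{\None}$ (using that $\lambda n.\psi_c^n x$ being stationary makes this a legitimate enumerator; note we evaluate $\psi_c$ on input $n$ as well, so that the single step-index $n$ both drives the interpreter and indexes the enumeration — any $\Some x$ eventually produced for some input will be produced for all large enough indices on a large enough input by stationarity).

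First I would fix $\langle\psi, \_\rangle$ from $\SCT$ and spell out the coercions $\overline{(\cdot)}\of{(\nat\to\option\nat)}\to(\nat\to\nat)$ and its left inverse up to range equivalence. Then, given $p\of{\nat\to\nat\to\Prop}$ together with $f\of{\nat\to\nat\to\option\nat}$ such that each $f_i$ enumerates $p_i$, I would form the family $g\of{\nat\to\nat\to\nat}$ with $g_i := \overline{f_i}$, apply the parametric universality clause of $\SCT$ to $g$ to obtain $\delta\of{\nat\to\nat}$ with $\forall i\,x.\exists n.\;\psi_{\delta i}^{\,n} x = \Some(g_i\,x)$, and set $\gamma := \delta$. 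The remaining work is to check that $\varphi_{\gamma i}$ enumerates $p_i$: unfolding, $y$ is in the range of $\varphi_{\gamma i}$ iff there are $n$ with $\psi_{\gamma i}^{\,n}\,n$ producing $\succN(\succN y)$ or some $n$ producing $\succN 0$ decoding to $0$ — I need to reconcile the "diagonal" index $n$ used in $\varphi$ with the "there exists some step count $m$, for some input $x$" shape of enumeration. Because $\lambda m.\psi_{\gamma i}^{\,m} x$ is stationary and $g_i$ ranges over all the $\succN$-shifted values of $f_i$, a standard dovetailing/monotonicity argument gives: $\exists n.\;\varphi_{\gamma i}\,n = \Some y$ iff $\exists x.\;\overline{f_i}\,x = \succN y$ or $(\ldots=0)$, iff $\exists x.\;f_i\,x = \Some y$, iff $p_i\,y$.

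The main obstacle I anticipate is precisely this reconciliation of the two uses of the natural-number index. In $\varphi_c\,n$ the argument $n$ plays three roles simultaneously — it is the enumeration index, the step bound for $\psi_c$, and the input fed to $\psi_c$ — so I must be careful that nothing is lost: every pair (input $x$, step count $m$) must be "covered" by some sufficiently large $n$. The clean fix is to first pack input and step count into a single number via the pairing function before applying $\SCT$'s universality — i.e.\ work with $g_i\,\langle x, m\rangle := \overline{f_i\text{-at-step-}m}\ldots$ — or, more simply, to exploit stationarity: since $\psi_{\gamma i}^{\,n} x = \Some v$ for some $n$ implies it for all larger $n$, and since we may take $n \geq x$, evaluating at the diagonal point $\psi_{\gamma i}^{\,n}\,n$ for all $n$ still witnesses every value eventually. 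Everything else (the monadic bookkeeping, the $\iteis{}{}{}{}$ decoding, discreteness of $\nat$ for Fact~\ref{coq:semi_decidable_enumerable}-style reasoning) is routine. I would expect the Coq proof to mirror the proof of Theorem~\ref{coq:SCT_to_EPF} closely, with $\mathsfe{seval}$/$\mathsfe{mkstat}$ replaced by the direct $\succN$-shift coercion.
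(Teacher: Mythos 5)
Your overall strategy --- coerce each enumerator $f_i \of{\nat\to\option\nat}$ into a total function $\nat\to\nat$ via the $\succN$-shift, apply the parametric universality of $\psi$ to the shifted family to get $\gamma$, and decode $\psi$'s outputs back into an enumerator $\varphi$ --- is exactly the paper's, and the opening detour through Theorems~\ref{coq:EPF_to_SCT} and~\ref{coq:EPF_to_CT_SMN} is unnecessary (no $S^m_n$ operator is needed). The gap is in your definition of $\varphi$. With $\varphi_c\,n := \iteis{\psi_c^{\,n}\,n}{\Some (\succN x)}{\Some x}{\None}$, the only query ever made at input $x$ is $\psi_c^{\,x}\,x$, i.e.\ input $x$ is run for exactly $x$ steps. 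If $g_i\,x = \succN y$ but $\psi_{\gamma i}^{\,m}\,x$ first converges at some $m > x$, and no other input of $g_i$ happens to carry the value $\succN y$, then $y$ never appears in the range of $\varphi_{\gamma i}$, so $\varphi_{\gamma i}$ fails to enumerate $p_i$. Your appeal to stationarity does not repair this: stationarity lets you increase the step count at a \emph{fixed} input, whereas on the diagonal increasing $n$ simultaneously changes the input from $x$ to $n$, so you are no longer approximating $g_i\,x$ at all. (The forward direction of your equivalence is fine; it is the backward direction --- every value of $f_i$ is eventually emitted --- that breaks.)

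The fix you mention in passing is the right one, but the pairing belongs in $\varphi$, not in $g$: keep $g_i := \overline{f_i}$ unchanged and define $\varphi_c\,\langle n,m\rangle := \iteis{\psi_c^{\,n}\,m}{\Some (\succN x)}{\Some x}{\None}$, so that the enumeration index of $\varphi_c$ ranges over all pairs of a step count $n$ and an input $m$, and every such pair is covered. This is precisely the paper's definition. Packing the pair into the argument of $g$ instead (your ``$g_i\,\langle x,m\rangle$'' variant) is not obviously sound either, since the convergence time of $\psi_{\gamma i}$ at input $\langle x,m\rangle$ may outgrow $\langle x,m\rangle$; and in any case $f_i$ is a total function, so there is no ``$f_i$ at step $m$'' to speak of. With the pairing placed in $\varphi$, the rest of your argument (the $\succN$-shift, the application of $\SCT$, the decoding, and the use of stationarity to identify the limit of $\lambda n.\,\psi_{\gamma i}^{\,n}\,m$ with $g_i\,m$) goes through as you describe.
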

\begin{proof}
  Let a universal function $\psi$ be given. Define:
  \[\varphi_c \langle n,m \rangle \definedas \iteis{\psi_c^n m}{\Some (\succN x)}{\Some x}{\None}\]
  Let $f\of{\nat\to\nat\to\option\nat}$ be a parametric enumerator for $p$.
  We define $f'\of{\nat\to\nat\to\nat}$ as $f' x n := \iteis{f x n}{\Some y}{\succN y}{0}$.
  By $\SCT$, we obtain a function $\gamma$ for $f'$, and we have $\forall x y.\;p x y \leftrightarrow \exists k.\;\varphi_{\gamma x} k = \Some y$ as wanted. \qed
\end{proof}

For the converse direction, we use that the graph of functions is enumerable.

\begin{theorem}[][EA_to_SCT]
  $\EA \to \SCT$
\end{theorem}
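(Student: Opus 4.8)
The plan is to produce the step-indexed interpreter demanded by $\SCT$ from the enumerator $\varphi$ supplied by $\EA$, using that the graph $\{(x, f\,x)\}$ of a total function $f \of{\nat\to\nat}$ is an enumerable predicate on $\nat$ once pairs are coded as numbers. First I would obtain $\varphi \of{\nat\to(\nat\to\option\nat)}$ from $\EA$ and define $\psi_c^n x$ by bounded search: let $k$ be the least index with $k \le n$ such that $\varphi_c k = \Some z$ where $z$ decodes to a pair whose first component is $x$; if such a $k$ exists, return $\Some y$ where $y$ is the second component of that pair, and otherwise return $\None$. The search predicate is decidable — inspect $\varphi_c k$, and if it is $\Some z$ compare the first component of the pair coded by $z$ with $x$ — so this is a legitimate definition of a function of type $\nat\to\nat\to\nat\to\option\nat$. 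Because the returned value always comes from the \emph{least} witnessing index, the stationarity clause $\psi_c^{n_1} x = \Some v \to n_2 \ge n_1 \to \psi_c^{n_2} x = \Some v$ is immediate: the least witness $\le n_1$ is still the least witness $\le n_2$.

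For parametric universality, let $f \of{\nat\to\nat\to\nat}$ be given. I would apply $\EA$ to the predicate $p \of{\nat\to\nat\to\Prop}$ defined by $p_i z := \exists x\, y.\; z = \langle x, y\rangle \land f_i x = y$, which is parametrically enumerable via $f'_i n := \Some\langle n, f_i n\rangle$ (checking that $f'_i$ enumerates $p_i$ is a one-line unfolding in both directions). This yields $\gamma \of{\nat\to\nat}$ with $\varphi_{\gamma i}$ enumerating $p_i$ for every $i$. I then claim $\forall i\, x.\exists n.\;\psi_{\gamma i}^n x = \Some (f_i x)$. Since $p_i \langle x, f_i x\rangle$ holds, there is a $k$ with $\varphi_{\gamma i} k = \Some\langle x, f_i x\rangle$, so for $n \ge k$ the bounded search in $\psi_{\gamma i}^n x$ succeeds; and the value it returns is correct because every $z$ in the range of $\varphi_{\gamma i}$ satisfies $p_i z$, hence every pair $\langle x, y\rangle$ in that range has $y = f_i x$, so no matter which witnessing index the search lands on, the second component is exactly $f_i x$.

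I expect the main obstacle to be precisely this last point — the bounded search may hit a pair $\langle x, y\rangle$ before reaching $\langle x, f_i x\rangle$, and one must argue $y = f_i x$ regardless — which is resolved by observing that the range of $\varphi_{\gamma i}$ is precisely the graph of the single-valued function $f_i$. The remaining ingredients, namely decidability of the search predicate, the bookkeeping for stationarity, and unfolding the definition of "enumerates", are routine.
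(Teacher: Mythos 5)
Your proposal is correct and follows essentially the same route as the paper: both define $\psi_c^n x$ by filtering $\varphi_c$ for pairs with first component $x$ and taking the least witness (the paper packages this least-witness search as the $\mkstat$ combinator, whose specification is exactly your bounded-search behaviour), and both obtain the coding function by applying $\EA$ to the parametrically enumerable graph predicate of $f$. Your explicit resolution of the "wrong pair found first" worry — that the range of $\varphi_{\gamma i}$ is precisely the single-valued graph of $f_i$ — is the right observation and is implicit in the paper's appeal to the enumerator specification.
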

\begin{proofqed}
  Let $\varphi$ as in $\EA$ be given.
  Recall $\mathsf{mkstat} \of{(\nat\to\option X) \to \nat \to \option X}$ 
  turning arbitrary $F\of{\nat\to\option\nat}$ into stationary sequences.
  We define
  \small
  \begin{align*}
    \varphi_c^n x := \mathsf{mkstat} (\lambda n.\; \iteis{\varphi_c n&}{\Some \langle x',y \rangle\\&}{\ite{x' =_{\bool} x}{\Some y}{\None}}{\None}) n
  \end{align*}
  \normalsize
  Let $f\of{\nat\to\nat\to\nat}$ and
  let $\varphi_{\gamma x}$ enumerate $\lambda x \langle n,m \rangle.\;f x n = m$ via $\EA$.
  Now $\gamma$ serves as coding function for $f$ by Fact \ref{coq:mkstat_spec}.  
\end{proofqed}

\section{%
  Rice's theorem}
\label{sec:rice}
\label{sec:rice1}
\setCoqFilename{Synthetic.Rice}

One of the central results of every introduction to computability theory is Rice's theorem~\cite{Rice1953}, stating that non-trivial semantic predicates on programs are undecidable.
Two proof strategies can be found in the literature:
By using a fixed-point theorem or by establishing a many-one reduction from $\compl \K$.
We here give synthetic variants of both proofs.

We base the first proof on the axiom $\EPF$, since the notion of a fixed-point is more natural there.
We base the second proof on the axiom $\EA$. Here the choice is less canonical, but using $\EA$ enables a comparison of $\EA$ and $\EPF$ as axioms for synthetic computability.

We start by assuming $\EPF$ and proving a fixed-point theorem due to Rogers~\cite{rogers1987theory}.

\begin{theorem}[][FP]%
  Let $\theta$ be given as in $\EPF$ and $\gamma \of \nat\to\nat$, then there exists $c$ \st\ $\theta_{\gamma c} \equiv \theta_{c}$.
\end{theorem}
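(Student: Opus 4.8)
The plan is to replay the proof of Kleene's recursion theorem in the synthetic setting, using the parametric universality of $\theta$ in two ways: once as an $S^1_1$-operator, and once to obtain a code for a total function.

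First I would apply $\EPF$ to the family of partial functions $f \of{\nat \to (\nat \pfun \nat)}$ given by $f_e\, x := \theta_e\, e \bind \lambda d.\; \theta_d\, x$. By the monad laws, whenever $\theta_e\, e \hasvalue d$ we have $f_e \equiv \theta_d$, and when $\theta_e\, e$ is undefined, so is $f_e$. From $\EPF$ we obtain $\delta \of{\nat\to\nat}$ with $\theta_{\delta e} \equiv f_e$ for every $e$, hence the implication
\begin{equation*}
  \theta_e\, e \hasvalue d \;\to\; \theta_{\delta e} \equiv \theta_d .
\end{equation*}

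Next, the composition $\gamma \circ \delta \of{\nat\to\nat}$ is total, so by universality of $\theta$ for the partial function $\lambda x.\;\ret(\gamma(\delta\, x))$ — which is just $\EPF$ applied to the constant family with this function in every component — there is a code $e_0$ with $\theta_{e_0}\, x \hasvalue \gamma(\delta\, x)$ for all $x$, and in particular $\theta_{e_0}\, e_0 \hasvalue \gamma(\delta\, e_0)$. Setting $c := \delta\, e_0$ and instantiating the displayed implication with $e := e_0$ and $d := \gamma(\delta\, e_0) = \gamma\, c$, the hypothesis $\theta_{e_0}\, e_0 \hasvalue \gamma\, c$ yields $\theta_c = \theta_{\delta\, e_0} \equiv \theta_{\gamma\, c}$, which is the claim.

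The conceptual content is entirely in the self-application hidden in $\theta_{e_0}\, e_0$; the points that need care are the monadic bookkeeping in the definition of $f$ and the observation that the parametric form of $\EPF$ is by itself strong enough to serve both as the $S^1_1$-operator (applied to $f$) and as plain universality (applied to $\gamma\circ\delta$), so that no separate $\SMN$-style lemma is required.
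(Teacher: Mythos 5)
Your proof is correct and is essentially the paper's own argument: the same self-application function $f_e\,x := \theta_e\,e \bind \lambda d.\,\theta_d\,x$, the same two invocations of parametric universality (once for the family $f$, once for the total function $\lambda x.\,\gamma(\delta\,x)$), and the same fixed point $\delta\,e_0$ (the paper's $\gamma'\,c$). The only difference is notational, plus your explicit remark that no separate $\SMN$ lemma is needed, which matches the paper's design of $\EPF$.
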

\begin{proofqed}
  Let $\gamma \of{\nat\to\nat}$.
  Let $f_x z := \theta_x x \bind \lambda y. \theta_y z$ and $\gamma'$ via $\EPF$ be such that $\theta_{\gamma' x} \equiv f_x$ (1).
  Let $c$ via $\EPF$ be such that $\forall x.\;\theta_c x \hasvalue \gamma (\gamma' x)$ (2).
  
  Now $f_c \equiv \theta_{\gamma' c}$ by (1).
  
  Also $f_c z \equiv (\theta_c c \bind \lambda y. \theta_y z) \equiv \theta_{\gamma(\gamma' x)} z$ by the definition of $f$ and (2).
 
  Now $\gamma' c$ is a fixed-point for $\lambda i.\theta_{\gamma i}$: $\theta_{\gamma (\gamma' c)} \equiv f_c \equiv \theta_{\gamma' c}$.
\end{proofqed}

Rice's theorem can then be stated and proved as follows:

\begin{theorem}[][Rice_Theorem']\label{thm:rice1}
  Let $\theta$ be given as in $\EPF$ and $p\of{\nat\to\Prop}$.
  If $p$ treats elements as codes w.r.t.\ $\theta$ and is non-trivial, then $p$ is undecidable.
  Formally:
  \[ (\forall c c'.\;\theta_{c} \equiv \theta_{c'} \to p c \leftrightarrow p c') \to \forall c_1 c_2.\; p c_1 \to \neg p c_2 \to \neg \decidable p \] 
\end{theorem}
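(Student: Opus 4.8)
The plan is to argue by contradiction. Assume $\decidable p$; since the goal $\neg\decidable p$ lives in $\Prop$, the existential in the definition of $\decidable$ may be eliminated, yielding a decider $d \of \nat\to\bool$ with $\forall x.\; p x \leftrightarrow d x = \btrue$. The non-triviality hypotheses already hand us two concrete codes $c_1,c_2$ with $p c_1$ and $\neg p c_2$. The idea is to use $d$ to build a total index transformation that always ``flips'' membership in $p$, and then feed it to Rogers' fixed-point theorem (Theorem~\ref{coq:FP}) to produce a code that disagrees with itself about $p$.

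Concretely, I would define $\gamma \of \nat\to\nat$ by $\gamma x := \ite{d x}{c_2}{c_1}$; this is a genuine total function because $d$ is. Applying Theorem~\ref{coq:FP} to $\gamma$ gives a code $c$ with $\theta_{\gamma c}\equiv\theta_c$. Since $p$ treats elements as codes w.r.t.\ $\theta$, from $\theta_{\gamma c}\equiv\theta_c$ we obtain $p(\gamma c)\leftrightarrow p c$.

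Now I case-split on the boolean $d c$. If $d c = \btrue$, then $p c$ holds, hence $p(\gamma c)$ holds; but $\gamma c = c_2$, so $p c_2$, contradicting $\neg p c_2$. If $d c = \bfalse$, then $\neg p c$ (as $d$ decides $p$), hence $\neg p(\gamma c)$; but $\gamma c = c_1$, so $\neg p c_1$, contradicting $p c_1$. Either branch yields $\bot$, which is exactly $\neg\decidable p$.

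I do not expect a genuine obstacle: the argument is short and uses only Theorem~\ref{coq:FP} together with the two hypotheses on $p$. The one point to get right is the orientation of $\gamma$ — it must map indices of predicates satisfying $p$ to a non-$p$ code and conversely, so that the fixed point $c$ is forced into a contradiction with itself. It is also worth noting that no choice principle and no large elimination of $\exists$ is required, since the conclusion is a negation and the decider is extracted only into a $\Prop$-context.
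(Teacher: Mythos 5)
Your proof is correct and takes essentially the same route as the paper: both arguments apply Rogers' fixed-point theorem (Theorem~\ref{coq:FP}) to an index map that flips membership in $p$ between $c_1$ and $c_2$, and then derive a contradiction by case analysis on the decider at the fixed point. The only (valid) difference is a small shortcut: you define $\gamma\,x := \ite{d\,x}{c_2}{c_1}$ directly as a total function on codes, whereas the paper first forms the family of partial functions $h_x := \ite{f\,x}{\theta_{c_2}}{\theta_{c_1}}$ and invokes $\EPF$ once more to obtain a coding function $\gamma$ with $\theta_{\gamma x}\equiv h_x$ before applying the fixed-point theorem --- an intermediate step your version shows is unnecessary, since $\theta_{\gamma x}$ for your $\gamma$ is already $\theta_{c_2}$ or $\theta_{c_1}$ by computation.
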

\begin{proofqed}
  Let $f$ decide $p$ and let $p c_1$ and $\neg p c_2$.
  Define $h \of{\nat\to\nat\pfun\nat}$ as
  $h_x := \ite{f x}{\theta_{c_2}}{\theta_{c_1}}$ and let $\gamma$ via $\EPF$ be such that $\theta_{\gamma x} y \equiv h_x y$.
  Let $c$ be a fixed-point for~$\gamma$ via Theorem \ref{coq:FP}, i.e.\ $\theta_{\gamma c} \equiv \gamma$.

  Then either $f c = \btrue$ and thus $p c$, but $\theta_c \equiv \theta_{c_2}$ and thus $p c_2$. A contradiction.

  Or $f c = \bfalse$ and thus $\neg p c$, but $\theta_c \equiv \theta_{c_1}$ and thus $\neg p c_1$. A contradiction.  
\end{proofqed}

Rice's theorem is often also stated for predicates $p \of (\nat\pfun\nat) \to \nat$.
This formulation has the advantage that the requirement on $p$ does not have to mention~$\theta$.

\begin{corollary}[][Rice_HO']
  $\EPF$ implies that if $p \of (\nat\pfun\nat) \to \nat$ is extensional and non-trivial, then $p$ is undecidable.
  Formally:
  \[ \EPF \to (\forall f f' \of \nat \pfun \nat. f \equivwrt{\nat\scriptpfun\nat} f' \to p f \leftrightarrow p f') \to \forall f_1 f_2.\;p f_1 \to \neg p f_2 \to \neg \decidable p\]
\end{corollary}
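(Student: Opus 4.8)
The plan is to derive the higher-order version of Rice's theorem (Corollary~\ref{coq:Rice_HO'}) from the first-order version (Theorem~\ref{thm:rice1}) by transporting the extensionality and non-triviality hypotheses along the universal function $\theta$. Given $\theta$ from $\EPF$ and a predicate $p \of (\nat\pfun\nat)\to\Prop$ that is extensional with witnesses $f_1, f_2$ satisfying $p f_1$ and $\neg p f_2$, I would define a first-order predicate $q \of \nat\to\Prop$ by $q c := p\, (\theta_c)$, i.e.\ $q$ holds of a code exactly when $p$ holds of the partial function that code denotes.

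First I would check that $q$ satisfies the hypothesis of Theorem~\ref{thm:rice1}: if $\theta_c \equivwrt{\nat\scriptpfun\nat} \theta_{c'}$, then extensionality of $p$ gives $p\,(\theta_c) \leftrightarrow p\,(\theta_{c'})$, which is precisely $q c \leftrightarrow q c'$. Next I would produce the two witnesses for non-triviality of $q$. By parametric universality of $\theta$ (used with the constant families $i \mapsto f_1$ and $i \mapsto f_2$, or just the non-parametric consequence that every partial function has a code), there are codes $c_1, c_2 \of \nat$ with $\theta_{c_1} \equivwrt{\nat\scriptpfun\nat} f_1$ and $\theta_{c_2} \equivwrt{\nat\scriptpfun\nat} f_2$. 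Extensionality of $p$ then yields $p\,(\theta_{c_1}) \leftrightarrow p f_1$, so $q c_1$ holds, and likewise $\neg q c_2$. Theorem~\ref{thm:rice1} now applies and gives $\neg\decidable q$.

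Finally I would transfer undecidability back from $q$ to $p$: a decider $g \of (\nat\pfun\nat)\to\bool$ for $p$ would yield a decider $\lambda c.\; g\,(\theta_c)$ for $q$, since $q c \leftrightarrow p\,(\theta_c) \leftrightarrow g\,(\theta_c) = \btrue$; this contradicts $\neg\decidable q$. Hence $p$ is undecidable. I expect no real obstacle here—the argument is a routine transport along $\theta$—the only point needing a little care is that the hypothesis of Theorem~\ref{thm:rice1} is stated with the $\equiv$ notation for partial functions rather than spelled out, so one must make sure the instance of $\equiv$ at type $\nat\pfun\nat$ lines up with the $\equivwrt{\nat\scriptpfun\nat}$ appearing in the extensionality hypothesis of $p$, which it does by definition.
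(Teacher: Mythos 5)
Your proposal is correct and matches the paper's proof essentially step for step: the paper likewise defines the index predicate $I_p := \lambda c.\;p(\theta_c)$, observes $I_p \redm p$ via $c \mapsto \theta_c$ (your decider transfer), obtains codes for $f_1,f_2$ from $\EPF$ to get non-triviality, and concludes by Theorem~\ref{thm:rice1}. No further comment is needed.
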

\begin{proofqed}
  Let $p$ be decidable.
  We define the index predicate of $p$ as $I_p := \lambda c \of \nat.\;p(\theta_c)$, and have $I_p \redm p$.
  Thus since $p$ is decidable, $I_p$ is decidable.
  Since $I_p$ treats elements as codes and is non-trivial using $\EPF$, we have that $I_p$ is undecidable by Theorem \ref{thm:rice1}.
  Contradiction.
\end{proofqed}

A second proof strategy for Rice's theorem is by establishing a many-one reduction from a problem proved undecidable via diagonalisation. %
We could use $\K$ defined using $\EPF$ in Fact \ref{coq:EPF_halting},
but here use $\EA$ to compare the two axioms.
Thus, we use the problem $\K$ as used in Fact \ref{coq:EA_halting}.
We follow Forster and Smolka~\cite{forster2017weak}, who mechanise a fully constructive proof of Rice's theorem based on the call-by-value $\lambda$-calculus by isolating a reduction lemma (\enquote{Rice's Lemma}).

\begin{lemma}[][Rice]
  Let $\varphi$ be as in $\EA$
  and $p \of \nat \to \Prop$.
  If $p$ treats elements as codes w.r.t.\ $\varphi$,
  $p$ is non-trivial
  and the code for the empty predicates satisfies $p$,
  then
  $\compl \K \redm p$.
  
  Formally let  $\W_c x := \exists n.\;\varphi_c n = \Some x$.
  We then have:\small
  \[
    (\forall c c'.\; (\forall x.\;\W_{c} x \leftrightarrow \W_{c'} x) \to p c \leftrightarrow p c')
    \to \forall c_\emptyset c_0.\;(\forall x.\neg \W_{c_\emptyset} x) \to p c_{\emptyset} \to
    \neg p c_0 \to \compl \K \redm p
  \]  
\end{lemma}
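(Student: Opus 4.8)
The goal is to construct a many-one reduction $f$ from $\compl{\K}$ to $p$, i.e. a function $f \of \nat \to \nat$ such that $\neg \K c \leftrightarrow p (f c)$ for all $c$. The plan is to use the $S^1_1$-operator for $\W$ provided by Lemma \ref{coq:EA_to_EA_star}: from $\EA$ we obtain an enumerator $\varphi$, a "universality" clause, and a function $\sigma \of \nat \to \nat \to \nat$ with $\W_{\sigma c x} y \leftrightarrow \W_c \langle x, y \rangle$. The key idea is to build a parametrically enumerable predicate $P \of \nat \to \nat \to \Prop$ where $P_c$ equals the empty predicate when $\neg \K c$ holds, and equals $\W_{c_0}$ (or anything guaranteeing $\neg p$, via the $c_0$ branch) when $\K c$ holds — then transport it along $\sigma$ and the coding function $\gamma$.

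Concretely, I would define $P \of \nat \to \nat \to \Prop$ by $P_c\, x := \K c \land \W_{c_0} x$, and check that $P$ is parametrically enumerable: $\K$ is enumerable (Lemma \ref{coq:EA_halting}), $\W_{c_0}$ is enumerable by construction of $\varphi$, enumerable predicates are closed under conjunction (Fact \ref{coq:dec_compl}), and the dependence on the parameter $c$ is unproblematic since $\K c \leftrightarrow \W_c c$ can be decided-in-the-limit uniformly in $c$. Hence by the enumerator-existence part of $\EA$ (or directly via Lemma \ref{coq:EA_iff_enumerable}) there is $\gamma \of \nat \to \nat$ with $\varphi_{\gamma c}$ enumerating $P_c$, i.e. $\W_{\gamma c}\, x \leftrightarrow (\K c \land \W_{c_0} x)$. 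Now set the reduction function to be $f := \gamma$. If $\neg \K c$, then $\W_{\gamma c}$ is the empty predicate, so $\W_{\gamma c}$ agrees with $\W_{c_\emptyset}$, hence by the extensionality hypothesis on $p$ and from $p c_\emptyset$ we get $p (\gamma c)$. If $\K c$, then $\W_{\gamma c}\, x \leftrightarrow \W_{c_0} x$, so $\W_{\gamma c}$ agrees with $\W_{c_0}$, hence from $\neg p c_0$ and extensionality we get $\neg p (\gamma c)$. Together this gives $\neg \K c \leftrightarrow p (\gamma c)$, i.e. $\compl{\K} \redm p$ via $\gamma$.

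The main obstacle I anticipate is verifying parametric enumerability of $P$ at the right level of uniformity: the enumerator must be a single function $g \of \nat \to \nat \to \option \nat$ with $g_c$ enumerating $P_c$ for every $c$, and the naive description "search for a witness of $\K c$ and, in parallel, enumerate $\W_{c_0}$" must be made to produce elements only once $\K c$ has been confirmed. This is a standard dovetailing argument — run the enumerator of $\K$ to look for the pair witnessing $\W_c c$, and only when step $n$ confirms it do we start emitting the outputs of the $\W_{c_0}$-enumerator — but it has to be spelled out as an honest function into $\option \nat$. A secondary, purely bureaucratic point is matching the "agreement" relation used here ($\forall x.\;\W_c x \leftrightarrow \W_{c'} x$) with the hypothesis shape on $p$; this is immediate once the two cases above are established, since in each case $\W_{\gamma c}$ is literally logically equivalent pointwise to one of $\W_{c_\emptyset}$ or $\W_{c_0}$.
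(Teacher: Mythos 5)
Your proof is correct and follows essentially the same route as the paper: the paper also obtains $\gamma$ from $\EA$ for the predicate $q := \lambda x\, y.\;\K x \land \W_{c_0} y$ (enumerable since $\K$ is enumerable and enumerability is closed under conjunction, with uniformity in $x$ handled by Fact \ref{coq:EA_iff_enumerable}), and then argues exactly your two cases to get $\neg\K x \leftrightarrow p(\gamma x)$. The $S^1_1$-operator detour in your planning is unnecessary, but the argument you actually carry out dispenses with it, matching the paper.
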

\begin{proofqed}
  The predicate $q := \lambda x y.\;\K x \land \W_{c_0} y$ is enumerable, meaning we obtain
  $\gamma$ from $\EA$ \st\ $\forall x y.\;\W_{\gamma x} y \leftrightarrow \K x \land \W_{c_0} y$.

  Let $\neg \K x$.
  We have $\W_{\gamma x} y \leftrightarrow \bot \leftrightarrow \W_{c_\emptyset} y$.
  Since $p c_{\emptyset}$ and $p$ is semantic also $p (\gamma x)$.
  Conversely, let $p (\gamma x)$ and $\K x$.
  We have $\W_{\gamma x} y \leftrightarrow \W_{c_0} y$.
  Since $p$ is semantic, also $p {c_0}$. Contradiction.
\end{proofqed}

\begin{theorem}[][Rice_Theorem]
  Let $\varphi$ be given as in $\EA$
  and $p \of \nat \to \Prop$.
  If $p$ treats inputs as codes w.r.t.\ $\varphi$ and~$p$ is non-trivial,
  then $p$ is not bi-enumerable.
  Formally let  $\W_c x := \exists n.\;\varphi_c n = \Some x$ be the universal table for $\varphi$.
  We then have:
  \[
    (\forall c c'. (\forall x.\;\W_{c} x \leftrightarrow \W_{c'} x) \to p c \leftrightarrow p c')
    \to \forall c_1 c_2.\; p c_1 \to \neg p c_2 \to \neg (\enumerable p \land \enumerable \compl p)
  \]  
\end{theorem}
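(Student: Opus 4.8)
The plan is to reduce this to the reduction lemma that was just established. Recall Rice's Lemma states: if $p$ is semantic w.r.t.\ $\varphi$, and there are codes $c_\emptyset, c_0$ with $\forall x.\neg\W_{c_\emptyset}x$, $p c_\emptyset$, and $\neg p c_0$, then $\compl\K \redm p$. The obstacle in applying it directly is that the hypotheses of the theorem only give us \emph{some} $c_1$ with $p c_1$ and \emph{some} $c_2$ with $\neg p c_2$, but Rice's Lemma needs the predicate satisfying $p$ to be (equivalent to) the \emph{empty} predicate. This may fail: it could be $p$ that holds of the empty predicate's code is false, i.e.\ the roles of $c_1,c_2$ might be swapped relative to what Rice's Lemma wants. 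So the proof splits on which of $p$, $\compl p$ contains the empty code, and applies Rice's Lemma to whichever side works.

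Concretely, first fix a code $c_\emptyset$ for the empty predicate: since $\lambda x.\bot$ is enumerable (by the everywhere-$\None$ enumerator), $\EA$ via \Cref{lem:EA_ran_iff} (or the first conjunct of \Cref{coq:EA_to_EA_star}) yields $c_\emptyset$ with $\forall x.\neg\W_{c_\emptyset}x$. Now case on the decidable-at-the-meta-level question of whether $p c_\emptyset$ holds — but since $p c_\emptyset$ is a proposition we cannot case-split on it constructively; instead, I would argue as follows. Suppose toward a contradiction that both $\enumerable p$ and $\enumerable\compl p$. The goal is to derive $\bot$. Observe that $\compl p$ is also semantic w.r.t.\ $\varphi$ (negating an equivalence), non-trivial (witnessed by $c_2,c_1$ in the swapped order), and one of $p$, $\compl p$ holds of $c_\emptyset$. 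In the case $p c_\emptyset$: apply Rice's Lemma with $c_\emptyset$ and $c_0 := c_2$ to get $\compl\K \redm p$; since $\enumerable p$ and $\nat$ is enumerable and discrete, \Cref{coq:enumerable_red} gives $\enumerable\compl\K$, contradicting \Cref{coq:EA_halting}. In the case $\neg p c_\emptyset$, i.e.\ $\compl p$ holds of $c_\emptyset$: apply Rice's Lemma to $\compl p$ with $c_\emptyset$ and $c_0 := c_1$ to get $\compl\K \redm \compl p$; since $\enumerable\compl p$, the same argument gives $\enumerable\compl\K$, again contradicting \Cref{coq:EA_halting}.

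The one subtlety is the meta-level case split on $p c_\emptyset$ versus $\neg p c_\emptyset$, which is an instance of excluded middle and hence not available in \CIC. The fix is to observe we do not need the full case split: from $\enumerable p \land \enumerable\compl p$ we get, for the particular input $c_\emptyset$, a semi-decision procedure, but more simply, $\enumerable p \land \enumerable\compl p$ makes $p$ decidable (a predicate and its complement both enumerable on a discrete type is decidable — by \Cref{coq:decidable_semi_decidable} and the standard fact that semi-decidable together with co-semi-decidable implies decidable), and a decider for $p$ lets us decide $p c_\emptyset$ as a boolean, on which we \emph{can} case-split. So: from the assumption derive $\decidable p$, hence a boolean $b$ deciding $p c_\emptyset$, split on $b$, and in each branch run the corresponding instance of Rice's Lemma as above to contradict \Cref{coq:EA_halting}. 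This is the main obstacle, and it is resolved entirely by first extracting decidability of $p$ before doing any case analysis.
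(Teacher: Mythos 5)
Your overall decomposition is exactly the paper's: obtain a code $c_\emptyset$ for the empty predicate from $\EA$, assume bi-enumerability, distinguish the cases $p\,c_\emptyset$ and $\neg p\,c_\emptyset$, and in each branch apply Rice's Lemma (to $p$ with $c_0:=c_2$, respectively to $\compl p$ with $c_0:=c_1$) to obtain $\compl\K\redm p$ or $\compl\K\redm\compl p$, contradicting the non-enumerability of $\compl\K$ via Fact~\ref{coq:enumerable_red} and Lemma~\ref{coq:EA_halting}. You are also right that a naive case distinction on the proposition $p\,c_\emptyset$ is an instance of excluded middle and needs justification.

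However, your proposed repair contains a genuine gap. The step \enquote{$\enumerable p\land\enumerable\compl p$ makes $p$ decidable} is a synthetic form of Post's theorem. The paper's facts only give the direction from decidable to semi-decidable and co-semi-decidable; the converse is not provable in $\CIC$ without further axioms --- it requires (and in this setting is essentially equivalent to) Markov's principle, cf.~\cite{forster2020churchs} --- and avoiding $\MP$ is precisely part of the point of this development. Concretely, the interleaved search for $c_\emptyset$ in the two enumerations can only be shown to terminate up to double negation, so no boolean deciding $p\,c_\emptyset$ can be extracted. The correct and much cheaper fix is to observe that after assuming $\enumerable p\land\enumerable\compl p$ the goal is $\bot$, a stable proposition; since $\neg\neg(P\lor\neg P)$ is an intuitionistic tautology, one may freely case split on any proposition $P$ when proving $\bot$. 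This licenses the paper's direct case analysis on $p\,c_\emptyset$ versus $\neg p\,c_\emptyset$, after which your two branches go through verbatim.
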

\begin{proofqed}
  Since $\lambda x \of \nat.\bot$ is enumerable, by $\EA$ there is $c_\emptyset$ \st\ $\forall x. \neg W_{c_\emptyset} x$.
  Now let $p c_1$, $\neg p c_2$, and let $p$ be bi-enumerable.
  
  If $p c_\emptyset$, we have $\compl {\K} \redm p$, a contradiction since $\compl \K$ would be enumerable by Fact~\ref{coq:enumerable_red}.
  If $\neg p c_\emptyset$ we have $\compl {\K} \redm \compl p$, again a contradiction.
\end{proofqed}

\begin{corollary}[][Rice_TheoremCorr]
  Let $\varphi$ be given as in $\EA$
  and $p \of \nat \to \Prop$.
  If $p$ treats inputs as codes w.r.t.\ $\varphi$ and $p$ is non-trivial,
  then $p$ is undecidable. \label{thm:ricecorr}
\end{corollary}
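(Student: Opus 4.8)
The plan is to obtain this as an immediate corollary of the preceding theorem (Theorem~\ref{coq:Rice_Theorem}), which under exactly the same hypotheses on $p$ — that $p$ treats inputs as codes w.r.t.\ $\varphi$ and is non-trivial — already yields $\neg(\enumerable p \land \enumerable \compl p)$. So the only thing left to do is to check that decidability of a predicate over $\nat$ implies that it is bi-enumerable, and then contrapose.

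Concretely, I would assume $\decidable p$ towards a contradiction. From a decider $f\of{\nat\to\bool}$ for $p$ one obtains a decider for $\compl p$ by negating $f$, so $\decidable \compl p$ holds as well (decidable predicates are closed under complement, unlike enumerable ones). By Fact~\ref{coq:decidable_semi_decidable}(1) both $p$ and $\compl p$ are then semi-decidable, and since $\nat$ is an enumerable type, Fact~\ref{coq:decidable_semi_decidable}(2) upgrades both to enumerable. Hence $\enumerable p \land \enumerable \compl p$, which together with the given witnesses $c_1, c_2$ of non-triviality contradicts Theorem~\ref{coq:Rice_Theorem}. Therefore $\neg\decidable p$.

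There is essentially no obstacle here: the argument is a short unfolding of ``decidable $\Rightarrow$ bi-enumerable'' fed into the preceding theorem. The only two points worth keeping in mind are that the passage from semi-decidable to enumerable relies on $\nat$ being an enumerable type (one of the basic enumerable types from the preliminaries), and that it is the complement-closure of decidability — which enumerability lacks — that lets a single decidability hypothesis furnish both enumerators at once.
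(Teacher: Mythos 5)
Your proposal is correct and matches the paper's intent exactly: the corollary is stated without an explicit proof precisely because it follows from Theorem~\ref{thm:ricecorr}'s predecessor by the observation that a decidable predicate on $\nat$ is bi-enumerable (via closure of decidability under complement, decidable $\Rightarrow$ semi-decidable, and semi-decidable $\Rightarrow$ enumerable on the enumerable type $\nat$). Nothing is missing.
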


We can state this second version of Rice's theorem for $p \of (\nat\to\Prop)\to\Prop$.
\begin{corollary}[][Rice_HO]
  $\EA$ implies that if $p$ is extensional and non-trivial w.r.t.\ enumerable predicates, then $p$ is undecidable.
  Formally under the assumption of $\EA$ we have \small
  \[ (\forall q q' \of \nat \to \Prop. (\forall x.\;q x \leftrightarrow q' x) \to p q \leftrightarrow p q') \to \forall q_1 q_2.\;\enumerable q_1 \to \enumerable q_1 \to p q_1 \to \neg p q_2 \to \neg \decidable p\]
\end{corollary}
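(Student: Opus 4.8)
The corollary is a "higher-order" repackaging of the previous corollary (\Cref{thm:ricecorr}): instead of a predicate $p$ on codes, we have a predicate $p$ on enumerable predicates, and we want to reduce the higher-order case to the code-based case. The standard move is to pull $p$ back along the universal table $\W$, obtaining an index predicate, and then check that the hypotheses transport.

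\begin{proof}
  Let $\varphi$ be given as in $\EA$ with universal table $\W_c x := \exists n.\;\varphi_c n = \Some x$, and let $p \of (\nat\to\Prop)\to\Prop$ be extensional w.r.t.\ enumerable predicates and non-trivial, witnessed by enumerable $q_1, q_2$ with $p q_1$ and $\neg p q_2$.
  We define the index predicate $I_p \of \nat\to\Prop$ by $I_p c := p\,(\W_c)$.

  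First, $I_p$ treats inputs as codes w.r.t.\ $\varphi$:
  if $\forall x.\;\W_c x \leftrightarrow \W_{c'} x$, then $\W_c$ and $\W_{c'}$ are extensionally equal predicates, so by extensionality of $p$ we get $p\,(\W_c) \leftrightarrow p\,(\W_{c'})$, i.e.\ $I_p c \leftrightarrow I_p c'$.
  Second, $I_p$ is non-trivial:
  since $q_1$ is enumerable, $\lambda (x\of\nat).\;q_1 x$ is enumerable, so by $\EA$ (via \Cref{lem:EA_ran_iff}, or directly by the non-parametric universality contained in $\EAS$) there is a code $c_1$ with $\forall x.\;\W_{c_1} x \leftrightarrow q_1 x$; by extensionality of $p$ we get $p\,(\W_{c_1}) \leftrightarrow p\,q_1$, hence $I_p c_1$.
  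Symmetrically, from $q_2$ we obtain $c_2$ with $\neg I_p c_2$.
  By \Cref{thm:ricecorr} applied to $I_p$, it follows that $I_p$ is undecidable.

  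Finally, assume towards a contradiction that $p$ is decidable, say by $g \of (\nat\to\Prop)\to\bool$ — more precisely, $\decidable p$ gives $g$ with $p\,q \leftrightarrow g\,q = \btrue$ for all $q$.
  Then $\lambda c.\;g\,(\W_c)$ decides $I_p$, since $I_p c = p\,(\W_c) \leftrightarrow g\,(\W_c) = \btrue$.
  This contradicts undecidability of $I_p$.
  Hence $p$ is undecidable. \qed
\end{proof}

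\noindent The only mildly delicate point is the transport of non-triviality: one must use that every enumerable predicate $q\of\nat\to\Prop$ arises, up to extensional equality of predicates, as some $\W_c$ — this is exactly the (non-parametric) universality of $\varphi$ packaged inside $\EAS$, together with the fact that $p$ only sees predicates up to extensional equality, so we never need to choose the code $c$ canonically and no choice principle is invoked.
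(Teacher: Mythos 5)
Your proof is correct and follows essentially the same route as the paper: both define the index predicate $I_p\,c := p(\W_c)$, use $\EA$ to obtain codes for $q_1,q_2$ so that $I_p$ is non-trivial and treats elements as codes, invoke the code-level Rice corollary, and derive the contradiction from a decider for $p$ via the reduction $I_p \redm p$. The only difference is presentational — you spell out the decider $\lambda c.\;g(\W_c)$ explicitly where the paper just cites closure of decidability under many-one reduction.
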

\begin{proofqed}
  Let $p$ be decidable.
  We define the index predicate of $p$ as $I_p := \lambda c \of \nat.\;p(\W_c)$, and have $I_p \redm p$.
  Thus since $p$ is decidable, $I_p$ is decidable.
  Since $I_p$ treats elements as codes and is non-trivial using $\EA$, we have that $I_p$ is undecidable by Theorem~\ref{thm:ricecorr}.
  Contradiction.
\end{proofqed}

We have formulated both theorems to explicitly assume $\theta$  and $\varphi$ and their respective specification, to contrast the axioms $\EPF$ and $\EA$.
One can however obtain Theorem \ref{thm:rice1} from Theorem \ref{thm:ricecorr} -- and vice versa --
constructing $\theta$ from $\varphi$ and constructing a predicate $q$ treating elements as indices w.r.t.\ $\theta$ from a predicate $p$ treating elements as indices w.r.t.\ $\varphi$ -- and vice versa.

Proofs based on $\EPF$ require the manipulation of partial functions, which is formally tedious.
We will thus use $\EA$ as basis for synthetic computability:
In contrast to $\SCT$, it does not force us to encode every computation as total function $\nat\to\nat$,
and in contrast to $\EPF$ it does not force us to work with partial functions either.

Instead, we can simply consider enumerable predicates
and their enumerators, which are total functions.

\section{$\CT$ in the weak call-by-value $\lambda$-calculus}
\label{sec:CTinL}
\enlargethispage{-1\baselineskip}

\sfcommand{tm}
\sfcommand{closed}
\sfcommand{embed}
\sfcommand{unembed}
\newcommand{\encode}{\varepsilon}
\sfcommand{eval}
\sfcommand{unenc}

In this section we treat $\CT_{\L}$, the formulation of $\CT_{\phi}$ where $\phi$ is a universal function for the weak call-by-value $\lambda$-calculus $\L$~\cite{plotkin1975call,forster2017weak}.
We largely omit technical details in this section and refer the interested reader to the accompanying Coq code.
We only need a stationary function $\eval \of {\tm_{\L} \to \nat \to \option(\tm_\L)}$
such that $\exists n.\l\eval\;s\;n=\Some t$ for a $\lambda$-term $s$ if and only if $t$ is the normal form of $s$ w.r.t.\ weak call-by-value evaluation,
and a function $\encode_{\nat} \of {\nat \to \tm_{\L}}$ encoding natural numbers as terms, e.g.\ using Scott encoding~\cite{scott1968}, and an inverse function $\unenc$.

We define $\phi$ such that for an application $\phi_c^n x$ we translate the code $c$ to a term $t$
and then evaluate the application $t ~ (\encode_{\nat} x)$ for $n$ steps using the step-indexed interpreter $\eval$.
To interpret $c$ as term, we need the following:

\begin{fact}\label{coq:enum_closed_proc}
  There are functions $R \of {\nat\to\option(\tm_{\L})}$ and $I \of {\tm_{\L} \to \nat}$ \st\ we have $\closed\;s  \leftrightarrow R (I s) = \Some s$. %
\end{fact}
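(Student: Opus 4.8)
The plan is to recognise $\tm_{\L}$ as a discrete and enumerable type, to turn it into a retract of $\nat$, and then to cut the retraction down so that it only accepts closed terms. The underlying general fact is that every discrete and enumerable type $X$ comes with functions $I_0\of{X\to\nat}$ and $R_0\of{\nat\to\option X}$ satisfying $R_0\,(I_0\,x) = \Some x$ for all $x$.

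First I would use the two standard properties of the syntax of $\L$: equality on $\tm_{\L}$ is decidable (structural recursion on a first-order inductive datatype), and $\lambda s\of{\tm_{\L}}.\;\top$ is enumerable; both are provided by the accompanying development. Fix an enumerator $g\of{\nat\to\option(\tm_{\L})}$ hitting every term. For each $s$ the predicate $\lambda n.\;g\,n = \Some s$ is decidable (decidable equality on $\option(\tm_{\L})$) and, since $g$ enumerates all terms, it is satisfied by some $n$; hence by Fact~\ref{coq:computational_explosion}(3) I can extract such an $n$ as data, which defines a function $I_0\of{\tm_{\L}\to\nat}$ with $g\,(I_0\,s) = \Some s$ for all $s$. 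Setting $R_0 := g$ then gives $R_0\,(I_0\,s) = \Some s$.

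Second I would use that closedness of $\L$-terms is decidable, giving a function $\mathsf{closedb}\of{\tm_{\L}\to\bool}$ with $\closed\,s \leftrightarrow \mathsf{closedb}\,s = \btrue$ (check by structural recursion that no free de~Bruijn index escapes its binder). I then take $I := I_0$ and define
\[ R\,n := \match{R_0\,n}{\Some s \Rightarrow \ite{\mathsf{closedb}\,s}{\Some s}{\None} \mid \None \Rightarrow \None}. \]
The verification is immediate: if $\closed\,s$ then $R_0\,(I\,s) = \Some s$ and $\mathsf{closedb}\,s = \btrue$, so $R\,(I\,s) = \Some s$; conversely, $R\,(I\,s) = \Some s$ can only come from the $\Some$-branch with its guard satisfied, and comparing the two $\Some$-values forces $\mathsf{closedb}\,s = \btrue$, i.e.\ $\closed\,s$.

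There is no real obstacle here; the one step deserving care is the construction of $I_0$ as a genuine function (rather than as a mere existence statement), which is precisely an instance of the large elimination principle of Fact~\ref{coq:computational_explosion}(3) for $\exists$ over $\nat$ with a decidable predicate, and is the only place where discreteness of $\tm_{\L}$ is actually used. Everything else — enumerability of $\tm_{\L}$, decidability of closedness, and the two directions of the equivalence — is routine bookkeeping about the concrete syntax of $\L$.
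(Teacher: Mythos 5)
Your proof is correct and is exactly the intended construction (the paper defers the proof of this fact to the accompanying Coq development): $\tm_{\L}$ is discrete and enumerable, hence a retract of $\nat$ via the guarded minimisation principle of Fact~\ref{coq:computational_explosion}(3), and the retraction is then filtered through the decidable closedness test. Nothing is missing.
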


We can then define:
\begin{align*}
  \phi_c^nx := \iteis{R c&}{\Some t}{\\&\iteis{{\eval\;(t~(\encode_{\nat} n))\;n}}{\Some v}{\unenc\; v}{\None}}{\None}
\end{align*}

\begin{fact}
  If $f \of{\nat\to\nat}$ is computed by $t$ then
  $\forall x.\exists n.\;\phi_{I t}^nx = \Some(f x)$.
\end{fact}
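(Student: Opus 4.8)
The plan is to unfold the definition of $\phi_{I t}^n x$ and check that, for a suitable step index $n$, each of the three nested case distinctions takes the successful branch.

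First I would record what \enquote{$f$ is computed by $t$} means in $\L$: $t$ is a closed term (procedure) and for every $x$ the application $t\,(\encode_\nat x)$ weak call-by-value reduces to the normal form $\encode_\nat(f x)$. Now fix $x \of \nat$. Since $t$ is closed, Fact~\ref{coq:enum_closed_proc} gives $R\,(I\,t) = \Some\,t$, so evaluating $\phi_{I t}^n x$ enters the outer $\Some$-branch with the term $t$, regardless of $n$. Next, because $\encode_\nat(f x)$ is the normal form of $t\,(\encode_\nat x)$, the specification of the step-indexed interpreter $\eval$ (its \enquote{if} direction) yields some $n$ with $\eval\,(t\,(\encode_\nat x))\,n = \Some(\encode_\nat(f x))$. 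For this $n$, the second case distinction in the definition of $\phi$ takes the $\Some\,v$-branch with $v = \encode_\nat(f x)$, and the result is $\unenc\,v = \Some(f x)$ since $\unenc$ is a left inverse of $\encode_\nat$. Hence $\phi_{I t}^n x = \Some(f x)$, which establishes $\exists n.\;\phi_{I t}^n x = \Some(f x)$.

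The only mild points of care: the step index fed to $\eval$ and the outer step index of $\phi$ must be the same — but they are literally the same bound variable, so picking $n$ large enough for the $\eval$ call suffices — and one must observe that \enquote{computed by $t$} indeed entails $t$ closed, so that Fact~\ref{coq:enum_closed_proc} is applicable. Everything else is a routine unfolding of the three-way \kw{match} together with the inverse law $\unenc\,(\encode_\nat m) = \Some\,m$; I do not expect any real obstacle.
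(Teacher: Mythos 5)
Your argument is correct and is exactly the intended one: the paper states this fact without proof (deferring details to the Coq development), and the only sensible argument is precisely your unfolding of the three-way case analysis via $R\,(I\,t)=\Some t$ for closed $t$, the completeness direction of the step-indexed $\eval$, and the inverse law for $\unenc$. Your two ``points of care'' --- that ``computed by $t$'' entails closedness, and that the shared step index $n$ causes no trouble since the outer branch on $R\,(I\,t)$ is independent of $n$ --- are indeed the only places where anything could go wrong, and you handle both.
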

We write $\CT_{\L}$ instead of $\CT_{\phi}.$
To define an $S^m_n$ operator $\sigma$ we need the following:
\vspace{-1\baselineskip}
\begin{fact}\label{lem:auxiliaryembed}
  There are
  $t_{\embed}$ 
  and $t_{\unembed}$ computing $\lambda \langle n,m \rangle.(n,m)$ and $\lambda n m.\langle n,m \rangle$.
\end{fact}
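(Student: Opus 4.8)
The plan is to reduce the claim to the extraction facility of the accompanying $\L$ development: every function definable in \CIC over first-order, Scott-encodable types is computed by some $\L$-term. First I would recall the two functions at stake. The curried pairing function $\langle\_\,,\,\_\rangle\of{\nat\to\nat\to\nat}$ and its inverse $\lambda\langle n,m\rangle.(n,m)\of{\nat\to\nat\times\nat}$ were fixed on page~\pageref{def:pairing}; both are total and defined in \CIC. The type $\nat$ already carries the Scott encoding $\encode_\nat$, and $\nat\times\nat$ carries the Scott encoding of pairs, $(a,b)\mapsto\lambda f.\,f\,(\encode_\nat a)\,(\encode_\nat b)$, so both functions range between Scott-encodable first-order types.

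Next I would invoke the extraction mechanism, which for any such $g$ definable in \CIC yields an $\L$-term $t_g$ together with a proof that $t_g$ computes $g$ with respect to the chosen encodings; instantiating it at the two functions above produces $t_{\embed}$ and $t_{\unembed}$. Should one prefer an explicit construction instead, $t_{\embed}$ can be assembled from the pair combinator $\lambda a\,b\,f.\,f\,a\,b$ together with an $\L$-term for the iterated descent that implements unpairing, and $t_{\unembed}$ from $\L$-terms for addition and multiplication (built from the recursor on $\nat$) that implement the forward bijection; correctness in each case follows by induction on the recursive structure of the \CIC definitions, using the semantics of $\eval$.

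The only real difficulty here is bookkeeping rather than mathematics: one must fix the Scott encoding of $\nat\times\nat$, verify that the notion ``$t$ computes $f$'' used in this section instantiates coherently at the product type, and confirm that the concrete bijection chosen on page~\pageref{def:pairing} is the one being extracted. Since functions definable in \CIC on first-order data are uniformly $\L$-computable and these two functions are manifestly of that kind, no genuine obstacle arises; the remaining details are left to the accompanying Coq code, as elsewhere in this section.
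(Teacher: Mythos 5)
Your proposal is correct and takes essentially the same route as the paper, which states this fact without a written proof and defers to the accompanying Coq development, where the two terms are obtained exactly as you describe: by the certifying extraction of \CIC-definable first-order functions to $\L$-terms over Scott encodings. One cosmetic caveat: in the later definition of $\sigma$ the term $t_{\embed}$ is applied in curried fashion to two encoded naturals to produce the encoding of $\langle x,y\rangle$, so the two directions are the reverse of the order in which the Fact lists them; your explicit-construction fallback assigns them the other way around, but this does not affect the substance of the argument.
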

We define $\sigma c x := \iteis{R c}{\Some t}{I (\lambda y.\;t (t_{\embed} (\encode_{\nat} x) y))}{c}.$
Note that here the function $I$ is applied to an $\L$-term.

\begin{theorem}[][SMN]
  $\forall c x y v.\; (\exists n.\;\phi_{\sigma c x}^n y = \Some v) \leftrightarrow (\exists n.\;\phi_c^n \langle x,y \rangle = \Some v)$
\end{theorem}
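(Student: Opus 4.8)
The plan is to unfold the definition of $\sigma$ and $\phi$ and then push the equivalence through the specification of $\eval$, using Facts~\ref{coq:enum_closed_proc} and~\ref{lem:auxiliaryembed} to control the $\L$-terms that appear. First I would do a case analysis on $R\,c$. If $R\,c = \None$, then by the definition of $\sigma$ we have $\sigma c x = c$, and moreover $\phi_c^n(\cdot) = \None$ for every $n$ and every input (since the outer \kw{match} on $R\,c$ goes to the $\None$ branch), so both sides of the equivalence are $\exists n.\;\None = \Some v$, which is trivially equivalent. Hence we may assume $R\,c = \Some t$ for some closed $\L$-term $t$; by Fact~\ref{coq:enum_closed_proc} this $t$ is a closed term and $I\,t$ recovers a code with $R(I\,t)=\Some t$.

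In the main case, write $t' \definedas \lambda y.\; t\,(t_{\embed}\,(\encode_\nat x)\,y)$, so that $\sigma c x = I\,t'$ and $R(\sigma c x) = R(I\,t') = \Some t'$ (again by Fact~\ref{coq:enum_closed_proc}, once we check $t'$ is closed, which follows since $t$ is closed and $t_{\embed}, \encode_\nat x$ are closed). Then $\phi_{\sigma c x}^n y$ reduces (after the $\Some t'$ branch) to interpreting $\eval\;(t'\,(\encode_\nat y))\;n$ and postcomposing with $\unenc$, while $\phi_c^n \langle x,y\rangle$ reduces to interpreting $\eval\;(t\,(\encode_\nat\langle x,y\rangle))\;n$ and postcomposing with $\unenc$. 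The crucial computational fact is that the application $t'\,(\encode_\nat y)$ $\beta$-reduces to $t\,(t_{\embed}\,(\encode_\nat x)\,(\encode_\nat y))$, and by the specification of $t_{\embed}$ (Fact~\ref{lem:auxiliaryembed}), $t_{\embed}\,(\encode_\nat x)\,(\encode_\nat y)$ evaluates to $\encode_\nat\langle x,y\rangle$. Therefore $t'\,(\encode_\nat y)$ and $t\,(\encode_\nat\langle x,y\rangle)$ have the same normal form w.r.t.\ weak call-by-value evaluation (if either has one), so by the specification of $\eval$ the predicate $\exists n.\;\eval\;(t'\,(\encode_\nat y))\;n = \Some w$ is equivalent to $\exists n.\;\eval\;(t\,(\encode_\nat\langle x,y\rangle))\;n = \Some w$ for each $w$; the outer $\unenc$ and the option-matching are then applied identically on both sides, giving the desired equivalence (absorbing the mismatch in step counts into the existential quantifier, using stationarity of $\eval$).

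The main obstacle I expect is the bookkeeping around step indices and the weak call-by-value reduction discipline: one must make sure that the extra $\beta$-steps introduced by $t'$ (the outer $\lambda y$) and by evaluating $t_{\embed}$ do not break anything, i.e.\ that "has a normal form" really is preserved in \emph{both} directions, and that the existential over $n$ smooths over the differing number of steps — this is exactly where stationarity of $\eval$ and the monotonicity lemma for $\phi$ (that $\phi_c^{n_1}x = \Some v \to n_2 \geq n_1 \to \phi_c^{n_2} x = \Some v$) are used. A secondary, more mundane obstacle is verifying the closedness side conditions needed to invoke Fact~\ref{coq:enum_closed_proc} for $t'$, and checking that $\unenc$ behaves coherently on the normal forms involved (they are Scott-encoded numerals on both sides whenever defined). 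None of these are deep, but they are the technical heart of the argument, and in the formalisation they are presumably where most of the work sits; in the paper they can reasonably be left to the accompanying Coq development, as the surrounding text already signals.
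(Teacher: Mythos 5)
Your proposal is correct and follows essentially the same route as the paper's proof: unfold $\sigma$ and $\phi$, observe that $t'\,(\encode_\nat y)$ and $t\,(\encode_\nat\langle x,y\rangle)$ are convertible via the specification of $t_{\embed}$, and transfer termination through $\eval$ with the step index absorbed by the existential. Your treatment is slightly more explicit about the $R\,c = \None$ case and the closedness side condition for $t'$, but these are exactly the details the paper delegates to the formalisation.
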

\begin{proofqed}
  We only prove the direction from right to left, the other direction is similar.
  Let $\phi_c^n \langle x,y \rangle = \Some v$.
  Then $R c = \Some t$ and $t~\encode_{\nat}(\langle x, y \rangle) \triangleright v$ for some closed term $t$.
  Let $s := (\lambda y.\; t_{\embed} (\encode_{\nat} x) y)~(\encode_{\nat} y)$.
  We have $t~\encode_{\nat}(\langle x, y \rangle) \equiv s$ and thus $s \triangleright v$.
  Thus there is $m$ s.t.\ $\eval\;m\;s = \Some v$ and we have $\phi_{\sigma c x}^m = \Some v$.
\end{proofqed}

\begin{corollary}\label{lem:CTL_to_SMN}
  $\CT_{\L} \to \Sigma \phi.\;\CT_{\phi} \land \SMN_{\phi}$
\end{corollary}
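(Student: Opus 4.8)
The plan is to instantiate the existential $\Sigma\phi$ with the very step-indexed interpreter $\phi$ constructed in this section out of $R$, $I$, $\eval$, $\encode_{\nat}$ and $\unenc$, and then to discharge the two conjuncts using work already done. For $\CT_{\phi}$: by definition $\CT_{\L}$ \emph{is} $\CT_{\phi}$ for this particular $\phi$, so this conjunct is the hypothesis. (If one instead reads $\CT_{\L}$ as ``every $f\of{\nat\to\nat}$ is computed by some $\L$-term $t$'', then, given such a $t$, Fact~\ref{coq:enum_closed_proc} gives $R(I\,t)=\Some t$, so the outer branch in the definition of $\phi$ fires and $\phi_{I\,t}^n x$ reduces to evaluating $t\,(\encode_{\nat} x)$ with $\eval$; the Fact immediately preceding this corollary then yields $\forall x.\exists n.\;\phi_{I\,t}^n x = \Some(f x)$, i.e.\ $I\,t$ is a $\phi$-code for $f$.) Along the way one notes that $\lambda n.\,\phi_c^n x$ is stationary, since $\eval$ is stationary by its specification; this is exactly the shape needed to feed the result into Theorem~\ref{coq:CT_SMN_to_SCT}.

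For $\SMN_{\phi}$: take $\sigma$ to be the explicit function $\sigma\,c\,x \definedas \iteis{R\,c}{\Some t}{I(\lambda y.\,t\,(t_{\embed}\,(\encode_{\nat} x)\,y))}{c}$ introduced just above Theorem~\ref{coq:SMN} (with $t_{\embed}$ coming from Fact~\ref{lem:auxiliaryembed}). Theorem~\ref{coq:SMN} states precisely $\forall c\,x\,y\,v.\;(\exists n.\,\phi_{\sigma c x}^n y = \Some v) \leftrightarrow (\exists n.\,\phi_c^n\langle x,y\rangle = \Some v)$, which is the body of $\SMN_{\phi}$. Since $\SMN_{\phi}$ is a $\Sigma$-statement but $\sigma$ is here given by an explicit term rather than merely proved to exist, and the correctness statement lives in $\Prop$, there is no large-elimination obstruction to packaging the pair $(\sigma,\text{proof})$. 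Pairing this with the first conjunct yields $\Sigma\phi.\;\CT_{\phi}\land\SMN_{\phi}$.

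There is essentially no hard step: the corollary is the bookkeeping that collects the constructions and facts of this section. The only place that asks for a moment's care is the plumbing between $\L$-computability and $\phi$-codes --- checking that $R\circ I$ round-trips on closed terms and that $\unenc\circ\encode_{\nat}$ round-trips on $\nat$, so that the definition of $\phi$ behaves as the $\eval$-computation of $t\,(\encode_{\nat} x)$ intends --- but both of these are already recorded (Fact~\ref{coq:enum_closed_proc} and the stipulated properties of $\encode_{\nat}$ and $\unenc$), so nothing genuinely new has to be proved.
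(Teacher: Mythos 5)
Your proposal is correct and matches the paper's (implicit) argument exactly: the paper leaves the corollary without proof precisely because it is the bookkeeping you describe, namely instantiating $\phi$ with the interpreter defined in this section (for which $\CT_{\L}$ \emph{is} $\CT_{\phi}$ by the paper's own convention) and discharging $\SMN_{\phi}$ via the explicit $\sigma$ and Theorem~\ref{coq:SMN}.
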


\section{Related work}

Bauer~\cite{BauerSyntCT} develops synthetic computability based on an axiom stating that the set of enumerable sets of natural numbers is enumerable.
Translating to our type theoretic setting this yields the following axiom stating that there is an enumerator $\W$ of all enumerable predicates, up to extensionality.
\[ \EAintro' := \exists \W \of{\nat \to (\nat \to \Prop)}.\forall p \of{\nat \to \Prop}.~  
  \enumerable p \leftrightarrow \exists c.~\W_c \equivwrt{\nat \scriptto \Prop} p\]

Additionally to $\EA'$, Bauer also assumes countable choice and Markov's principle.
In general however, the assumption of countable choice makes the theory anti-classical, i.e.\ assuming $\LEM$ is inconsistent.
The interplay of axioms like $\EA'$, $\MP$, $\LEM$, and countable choice is discussed in~\cite{forster2020churchs}.
Countable choice allows extracting the enumerator for every enumerable predicate in the range of $\W$ computationally, corresponding to a non-parametric version of our axiom $\EA$.
Countable choice also can be used to prove a synthetic $S^m_n$ theorem w.r.t.\ $\W$. %

Our parametric formulation of $\EA$ implies $\EA'$, and conversely $\EA'$ implies $\EA$ under the presence of countable choice.

Richman~\cite{richman1983church} introduces the axiom $\mathsfe{CPF}$ (\enquote{Countability of Partial Functions}).
It states that the set of partial functions is (extensionally) countable, i.e.\ there is a surjection $\nat\to(\nat\pfun\nat)$ w.r.t.\ equivalence on partial functions.
Intensionally, Richman models the partial function space $\nat\pfun\nat$ as stationary functions.
Thus, written out fully his axiom is a non-parametric version of $\EPF$, just instantiated to the stationary functions model of partial functions.

Theory based on $\mathsfe{CPF}$ is developed in the book by Bridges and Richman~\cite{bridges1987varieties}, where $\mathsfe{CPF}$ is taken as basis for the constructivist system $\mathsfe{RUSS}$.
In $\mathsfe{RUSS}$, the axiom of countable choice is also assumed.
Bridges and Richman discuss that \enquote{in \mathsfe{RUSS} countable choice can usually be avoided}~\cite[p. 54]{bridges1987varieties} by postulating a composition operator for $\theta$ or, equivalently, an $\SMN$ operator.

Our two proofs of Rice's theorem are in strong support of this conjecture.
Recall that the first proof is based on a parametrically universal partial function $\theta$,
while the second proof is based on a parametrically universal enumerator $\varphi$.
The two proofs also use different proof strategies.

The second strategy establishes a reduction from $\K$.
This strategy is used in the textbooks by
Cutland~\cite{cutland1980computability},
Odifreddi~\cite{odifreddi1992classical},
Soare~\cite{soare1999recursively},
and
Cooper~\cite{cooper2003computability},
whereas
Rogers~\cite{rogers1987theory}
and
Sipser~\cite{sipser2006introduction}
pose Rice's theorem as an exercise.

The first strategy, based on Rogers' fixed-point theorem or equivalently based on Kleene's recursion theorem is less frequently found.
It is however mentioned in the Wikipedia article on Rice's theorem~\cite{enwiki:1017713534}.
The technique appears first in the lecture notes by Scott~\cite{scott1968}, who shows a variant of Rice's theorem for the $\lambda$-calculus.
Scott's proof can also be found in \cite{smullyan1994,barendregt2013lambda}.

We are aware of five machine-checked proofs of Rice's theorem:
Norrish~\cite{Norrish2011} proves Rice's theorem for the $\lambda$-calculus,
formulated for predicates $p \of (\nat\to\option\nat) \to \Prop$, using the proof strategy via reduction.
Forster and Smolka~\cite{forster2017weak} prove Rice's theorem for the weak call-by-value $\lambda$-calculus,
formulated for predicates on terms of the considered calculus which have the same extensional behaviour, using the proof strategy via reduction.
Forster~\cite{forster2014} proves Scott's variant of Rice's theorem for the weak call-by-value $\lambda$-calculus,
formulated for predicates on terms of the considered calculus which do not distinguish $\beta$-equivalent terms, using a fixed-point theorem.
Carneiro~\cite{carneiro:LIPIcs:2019:11067} proves Rice's theorem for $\mu$-recursive functions,
formulated for predicates $p \of (\nat\pfun\nat) \to \Prop$, using a fixed-point theorem.
Ramos et al.~\cite{ramosformalization} prove Rice's theorem for the functional language \texttt{PVS0},
formulated for predicates on \texttt{PVS0}, using an assumed fixed-point theorem. %

Bauer~\cite{BauerSyntCT} also presents a synthetic variant of Rice's theorem.
His formulation reads
\enquote{If $A$ is a set such that all functions of type ${A \to A}$ have a fixed-point, every function ${A \to \bool}$ is constant}
and uses the enumerability axiom as discussed above, but does not rely on countable choice to the best of our knowledge.
Note that our variants of Rice's theorem presented in this paper 
are trivialities in classical set theory, the foundation of textbook computability, since both $\EPF$ and $\EA$ are false in classical set theory where all problems have a characteristing decision function.
In contrast, Bauer's theorem is a triviality in classical set theory in \textit{two} ways:
First, the enumerability axiom is contradictory in classical set theory,
and second the statement of the theorem is a trivial even without axioms since if all functions $A \to A$ have a fixed-point, $A$ is a sub-singleton:
two distinct elements $a_1,a_2$ would allow constructing a fixed-point free function $\lambda x.\ite{x = a_1}{a_2}{a_1}$.
In short, we sacrifice identifying the minimal essence of theorems to better preserve classical intuitions.

\subsection*{Acknowledgements}

I would like to thank Gert Smolka, Dominik Kirst, and Dominique Larchey-Wendling for constructive feedback and productive discussions, as well as Andrej Bauer and Dominik Wehr for helpful advice on consistency proofs of $\CT$ in the literature.

\appendix

\section{Consistency and admissibility of $\textsf{CT}$ in \CIC}
\label{sec:CT}
\label{sec:CTcons}

In 1943, Kleene conjectured that whenever $\forall x.\exists y.\;R x y$ is constructively provable, there in fact exists a $\mu$-recursive function $f$ \st\ $\forall x.\;R x (f x)$~\cite{kleene1943recursive}.
This corresponds to a strong form of the admissibility of $\CT$.
In 1945, Kleene \cite{kleene1945interpretation} proved his conjecture for Heyting arithmetic, using number realizability.
An independent proof of this is due to~Beth~\cite{beth1948semantical}.

In this section, we use $\CT$ to denote the historical formulation of $\CT$, e.g.\ using $\mu$-recursive functions, which is however equivalent to $\CT_\L$.

Troelstra and van Dalen~\cite[\S 4.5.1 p. 204]{troelstra1988constructivism} state an even stronger result, using Gödel's Dialectia interpretation~\cite{godel1958bisher},
namely that in Heyting arithmetic $\CT$, $\MP$ and a restricted form of the independence of premise rule $\IP$ (with $P$ logically decidable) are consistent as schemes.

Odifreddi states that \enquote{for all current intuitionistic systems (not involving the concept of choice sequence) the consistency with $\CT$ has actually been established}~\cite[\S 1.8 pg. 122]{odifreddi1992classical}.
We do not discuss other systems for constructive or intuitionistic mathematics in detail.

For \theCIC, the result is not explicitly stated in the literature.
An admissibility proof of $\CT$ seems to be immediate as a consequence of Letouzey's semantics extraction theorem for Coq~\cite{LetouzeyPhd}.
Regarding a consistency proof one cannot mirror the situation in Heyting arithmetic, since a Dialectia interpretation for Coq is not available~\cite{pedrot:tel-01247085}.

However, several approaches seem to yield the result:

First, $\CT$ is consistent in intuitionistic set theory (e.g. $\mathsfe{IZF}$)~\cite{hahanyan1981consistency}, and $\mathsfe{IZF}$ can be used to model \theCIC~\cite{barras2010sets}.

Secondly, realizability models based on the first Kleene algebra prove $\CT$ consistent.
Luo constructs an $\omega$-set model for the Extended Calculus of Constructions ($\mathsfe{ECC}$, a type theory with type universes and impredicative $\Prop$, but no inductive types), where  \enquote{[t]he morphisms between $\omega$-sets are 'computable' in the sense that they are realised by partial recursive functions}~\cite[\S 6.1 pg. 118]{luo1994computation}.

Thirdly, it is well known how to build topos models of the calculus of constructions~\cite{hyland1989theory}.
The effective topos, due to~Hyland\cite{hyland1982effective}, validates $\CT$.

Fourthly, Swan and Uemura~\cite{swan2019church} give a sheaf model construction proving that $\CT$ is consistent in Martin Löf type theory,
together with propostional truncation, Markov's principle, and univalence.
It seems like the syntactic universe of propositions $\Prop$ does not hinder adapting their model construction to \theCIC.

Fivthly, Yamada~\cite{yamada2020game} gives a game-semantic proof that a $\forall f.\Sigma c$ form of $\CT$ is consistent in intensional Martin Löf type theory, settling an open question of at least 15 years~\cite{ishihara2018consistency}.
Note that this form is significantly stronger, since it allows defining a strictly intensional higher-order coding \textit{function} of type $(\nat\to\nat)\to\nat$, which is inconsistent under the assumption of functional extensionality~\cite{forster2020churchs}.
It is not obvious how to extend Yamada's proof to our $\forall f.\exists c$ formulation of $\CT$ in $\CIC$ with the impredicative universe $\Prop$.

\bibliographystyle{abbrv}
\enlargethispage{-2\baselineskip}
\bibliography{biblio}

\end{document}